\numberwithin{equation}{section}
\newcommand{\ndash}{\nobreakdash-\hspace{0pt}}
\newcommand{\Ndash}{\nobreakdash--}
\newcommand{\ii}{{\mathrm{i}}}
\newcommand{\dd}{{\mathrm{d}}}
\newcommand{\T}{{\mathbb T}}
\newcommand{\superx}{{\mathsf x}}
\newcommand{\supereta}{{\mathsf e}}
\newcommand{\z}{\bar z}
\newcommand{\J}{{\bar \jmath}}
\newcommand{\I}{{\bar \imath}}
\newcommand{\Seff}{S_{\mathrm{eff}}}
\newcommand{\tSeff}{{\Tilde S_{\mathrm{eff}}}}
\newcommand{\mr}{\mathrm}
\newcommand{\HH}{\mathcal{H}}
\newcommand{\PP}{\mathcal{P}}
\DeclareMathOperator{\Maps}{Map}
\newcommand{\LL}{\mathcal{L}}
\newcommand{\FF}{\mathcal{F}}
\newcommand{\bt}{\bullet}
\newcommand{\ra}{\rightarrow}
\newcommand{\be}{\begin{equation}}
\newcommand{\ee}{\end{equation}}
\newcommand{\calL}{{\mathcal{L}}}
\newcommand{\calW}{{\mathcal{W}}}
\newcommand{\calZ}{{\mathcal{Z}}}
\newcommand{\calF}{{\mathcal{F}}}
\newcommand{\EE}{\mathrm{e}}
\DeclareMathOperator{\Der}{Der}
\DeclareMathOperator{\Map}{Map}
\DeclareMathOperator{\Div}{div}
\DeclareMathOperator{\Pf}{Pf}
\newcommand{\id}{\mathrm{id}}
\newcommand{\ev}{\mathrm{ev}}
\DeclareMathOperator{\str}{STr}
\newcommand{\frX}{\mathfrak{X}}
\DeclareMathOperator{\End}{End}
\newtheorem{Thm}{Theorem}[section]
\newtheorem{Lem}[Thm]{Lemma}
\newtheorem*{Thm*}{Theorem}
\newtheorem*{Lem*}{Lemma}
\theoremstyle{remark}
\newtheorem{Rem}[Thm]{Remark}
\newtheorem*{Ack}{Acknowledgment}
\newtheorem*{Rem*}{Remark}
\theoremstyle{definition}
\newcommand{\bbR}{{\mathbb{R}}}
\newcommand{\de}{\partial}
\newcommand{\calH}{\mathcal{H}}
\newcommand{\calO}{\mathcal{O}}
\newcommand{\calM}{\mathcal{M}}
\newcommand{\calV}{\mathcal{V}}
\newcommand{\sfeta}{\boldsymbol{\eta}}
\newcommand{\sfA}{{\mathsf{A}}}
\newcommand{\sfB}{{\mathsf{B}}}
\newcommand{\sfX}{{\mathsf{X}}}
\def\gpd{\,\lower1pt\hbox{$\longrightarrow$}\hskip-.24in\raise2pt
               \hbox{$\longrightarrow$}\,}
\let\Tilde=\widetilde
\let\Hat=\widehat
\newcommand\qq{}
\newcommand\cmp[1]{{\qq Commun.\ Math.\ Phys.\ \bf #1}}
\newcommand\mpl[1]{{\qq Mod.\ Phys.\ Lett.\ \bf #1}}
\newcommand\lmp[1]{{\qq Lett.\ Math.\ Phys.\ \bf #1}}
\newcommand\ijmp[1]{{\qq Int.\ J. Mod.\ Phys.\ \bf #1}}
\newcommand\anp[1]{{\qq Ann.\ Phys.\ \bf #1}}
\newcommand\jdg[1]{{\qq J.\ Diff.\ Geom.\ \bf #1}}
\newcommand\ptps[1]{{\qq Prog.\ Theor.\ Phys.\ Suppl.\ \bf #1}}
\newcommand\proma[1]{{\qq Progress in Mathematics \bf #1}}
\newcommand\conm[1]{{\qq Cont.\  Math.\  \bf #1}}
\newcommand\jgp[1]{{\qq J. Geom.\ Phys.\ \bf #1}}
\begin{document}
\title{The Poisson sigma model on closed surfaces}


\author[F.~Bonechi]{Francesco Bonechi}
\address{I.N.F.N. and Dipartimento di Fisica, Via G. Sansone 1, I-50019 Sesto Fiorentino - Firenze, Italy}
\email{bonechi@fi.infn.it}

\author[A.~S.~Cattaneo]{Alberto~S.~Cattaneo}
\address{Institut f\"ur Mathematik, Universit\"at Z\"urich,
Winterthurerstrasse 190, CH-8057 Z\"urich, Switzerland}
\email{alberto.cattaneo@math.uzh.ch}

\author[P. Mnev]{Pavel Mnev}
\address{Institut f\"ur Mathematik, Universit\"at Z\"urich,
Winterthurerstrasse 190, CH-8057 Z\"urich, Switzerland}
\email{pmnev@pdmi.ras.ru}

\thanks{A.~S.~C. acknowledges partial support of SNF Grant No.~200020-131813/1.
P.~M. acknowledges partial support of RFBR Grants
Nos.~11-01-00570-a and
11-01-12037-ofi-m-2011
and of SNF Grant No.~200021-137595.}


\begin{abstract}
Using methods of formal geometry, 
the Poisson sigma model on a closed surface is studied in perturbation theory.
The effective action,
as a function on vacua,
is shown to have no quantum corrections if
the surface is a torus or if the Poisson structure is regular and unimodular (e.g., symplectic).
In the case of a K\"ahler structure or of a trivial Poisson structure,
the partition function on the torus is shown to be the Euler characteristic of the target; some evidence is given
for  this to happen more generally.
The methods of formal geometry introduced in this paper might be applicable to other sigma models, at least of the AKSZ type.
\end{abstract}

\maketitle

\tableofcontents

\section{Introduction}\label{intro}
In this note we study 
 the Poisson sigma model \cite{I,SS} with worldsheet a connected, closed surface $\Sigma$.
To do so we treat the Poisson structure on the target manifold $M$
as a perturbation and expand around the vacua (a.k.a.\ zero modes)
of the unperturbed action.\footnote{What we compute is then $\langle\EE^{\frac\ii\hbar S_\pi}\rangle_0$
where $S_\pi$ is the interaction part depending on the target Poisson structure $\pi$ and
$\langle\ \rangle_0$ denotes the expectation value for the Poisson sigma model with zero Poisson structure.}
As a critical point of the latter in particular contains a constant map,
we have first to localize around its image $x\in M$.
To glue perturbations around different points $x$, we use formal geometry \cite{GK}. Our first result is that the perturbative effective action (as a function on the moduli space of vacua for the unperturbed theory) has no quantum corrections if
$\Sigma$ is the torus or if the Poisson structure is regular and unimodular (e.g., symplectic). In the former case, under the further assumption that the Poisson structure is K\"ahler, we can also perform the integration over vacua and show that the partition function is the Euler characteristic of $M$.
For a general Poisson structure we can use worldsheet supersymmetry to regularize the effective action\footnote{We prove that the regularized effective action
does not depend on the regularization as long as one is present. However, in principle this is not the same theory as the non regularized one.}
and study it like in \cite{Wi}; this argument is however a bit formal unless some extra
conditions on the Poisson structure are assumed.

Notice that on the torus we need not assume unimodularity. For other genera,
the requirement of unimodularity was first remarked in \cite{BZ} where the leading term of the effective action on the sphere was also computed.

The techniques presented in this note, in particular the way of using formal geometry to get a global effective action, should be applicable to other field theories, in particular of the AKSZ type \cite{AKSZ}. The techniques of subsection~\ref{s:axial} and of subsections~\ref{ss-rea} and \ref{ss-rsf} should also extend to higher dimensional AKSZ theories in which the source
manifold is a Cartesian product with a torus.

The torus case may also be understood as follows.
Recall that the BV (Batalin--Vilkovisky) action for the Poisson sigma model can be given in terms of the AKSZ construction \cite{CF-AKSZ}. It is a function on the infinite dimensional graded manifold
$\Map(T[1]\Sigma,T^*[1]M)$. On a cylinder $\Sigma=S^1\times I$, the partition function should be interpreted as an operator on the Hilbert space associated to the boundary $S^1$.
As the theory is topological, this operator is the identity and the partition function on the torus is just
its supertrace.
Now, 
in the case of trivial Poisson structure,
the BFV (Batalin--Fradkin--Vilkovisky) reduced phase space associated to the boundary is the graded symplectic manifold $T^*T^*[1]M=T^*T[-1]M$. If we choose  the vertical polarization in the second presentation, the Hilbert space will be $C^\infty(T[-1]M)$, i.e., the
de~Rham complex with opposite grading. It is then to be expected that the partition function on the torus should be the Euler characteristic of $M$. In the perturbative computation, however, the final result is usually of the form $0\cdot\infty$, but in the K\"ahler case we get an unambiguous answer. We might then think of a K\"ahler structure on $M$, if it exists, as a regularization of the Poisson sigma model with trivial Poisson structure.\footnote{That is, we regularize
$\str_{C^\infty(T[-1]M)}\id=\langle 1\rangle_0$ as
\[
\langle 1\rangle_0 := \lim_{\epsilon\to0}\langle \EE^{\frac{\ii\epsilon}\hbar S_\pi}\rangle_0.
\]
We then show that, in the K\"ahler case, $\langle \EE^{\frac{\ii\epsilon}\hbar S_\pi}\rangle_0$
is independent of $\epsilon$ and equal to the Euler characteristic of $M$.}
Notice that, if such structures exist, they are open dense in the space of all Poisson structures on $M$. Another regularization, which produces the same result,
consists in adding the Hamiltonian functions of the supersymmetry generators for the effective action. Formally, this can even been done before integrating over
fluctuations around vacua.

Finally, notice that apart from the cases mentioned above we do expect the effective action to have quantum corrections. Moreover, the naively computed effective action in formal coordinates might happen not to be global. We show however that
it is always possible to find a quantum canonical transformation which makes it into the Taylor expansion of a global effective action. Its class modulo quantum canonical
transformations is then the well-defined object associated to the theory.

Section~\ref{s-formal} is a crash course in formal geometry (essentially following \cite[\S 2]{CF}). In Section~\ref{s-PSMformal}, we develop the construction of \cite[\S 6]{CF} to define
the effective action in formal coordinates. Next using the results of Section~\ref{s:caxial}, we show that, in the two special cases mentioned above, the effective action has no quantum correction and is the expression in formal coordinates of a global effective action. In Section~\ref{s:pf}, we study the effective action for the case of the torus and perform the computation of the partition function. 
Finally, in Section~\ref{s:global} we study the globalization of the effective action in general.

\begin{Ack}
We thank G.~Felder, T. Johnson-Freyd,
T.~Willwacher and M.~Zabzine
for useful discussions. A.S.C.\ thanks the University of Florence for hospitality.
\end{Ack}

\section{Formal local coordinates}\label{s-formal}
We shortly review the notion of formal local coordinates following the simple introduction of \cite[\S 2]{CF} (for more on formal geometry see \cite{BR,B,GK}).

A generalized exponential map for a manifold $M$ is just a smooth map $\phi\colon U\to M$, where $U$ is some open neighborhood of the zero section of $M$ in $TM$,
$(x,y\in U_x)\mapsto \phi_x(y)$, satisfying $\phi_x(0)=x$ and $\dd_y\phi_x(0)=\id$ $\forall x\in M$.
As an example, one may take the exponential map of a connection.

If $f$ is a smooth function on $M$, then the function $\phi^*f\in C^\infty(U)$ satisfies $\dd (\phi^*f)=\dd f\circ\dd\phi$. Denoting by $\dd_x$ ($\dd_y$) the horizontal (vertical) part of the differential, we then get $\dd_x (\phi^*f)=\dd f\circ\dd_x\phi$ and $\dd_y (\phi^*f)=\dd f\circ\dd_y\phi$. Because of the assumptions on $\phi$, there is an open neighborhood $U'\subset U$
of the zero section of $M$ in $TM$ on which $\dd_y\phi$ is invertible. As a consequence, on $U'$ we have the formula
\begin{equation}\label{e:dphif}
\dd_x(\phi^*f) = \dd_y(\phi^*f)\circ(\dd_y\phi)^{-1}\circ\dd_x\phi.
\end{equation}

Notice that, for each $x$, $\phi_x^*f$ is a smooth function on $U_x$.
By $T\phi_x^*f\in \Hat S T_x^*M$  we then denote its Taylor expansion in the $y\in U_x$\ndash variables around $y=0$.\footnote{Here $\Hat S$ denotes the formal completion of the symmetric algebra.}
In doing this, we associate to $f\in C^\infty(M)$ a section $T\phi^* f$ of $\Hat S T^*M$ over $M$. We may now reinterpret \eqref{e:dphif} as a condition on the section $T\phi^*f$
simply taking Taylor expansions w.r.t.\ $y$ on both sides. Notice that in the definition of $T\phi^*f$ and in the resulting condition only the Taylor coefficients of $\phi$ appears.
We are thus let to considering two generalized exponential maps as equivalent if all their partial derivatives in the vertical directions, for each point of the base $M$, coincide at the zero section.
We call formal exponential map an equivalence class of generalized exponential maps.

If $\phi$ is a formal exponential map, then $T\phi^*f\in\Gamma(\Hat ST^*M)$ is constructed as above just by picking any generalized exponential map in the given equivalence class.
Choosing local coordinates $\{x^i\}$ on the base and $\{y^i\}$ on the fiber, we have explicit expressions
\begin{equation}\label{e:phiT}
\phi^i_x(y)= x^i+y^i+\frac12\phi^i_{x,jk}y^jy^k+\frac1{3!}\phi^i_{x,jkl}y^jy^ky^l+\cdots,
\end{equation}
and the class of $\phi$ is simply given by the collection of coefficients $\phi_{x,\bullet}$. One can easily see that the coefficients
$\phi^i_{x,jk}$ of the quadratic term transform as the components of a connection. We will refer to this as the connection in $\phi$.
Also explicitly we may compute
\begin{equation}\label{e:Tphistar}
T\phi_x^*f = f(x) + y^i\de_if(x)+\frac12y^jy^k(\de_j\de_kf(x)+\phi^i_{x,jk}\de_if(x))+\cdots.
\end{equation}

Above we have proved that sections of $\Hat ST^*M$ of the form $T\phi^*f$ satisfy (the Taylor expansion of) equation \eqref{e:dphif}. One can easily prove that the converse is also true. In fact,
one has even more. We may think of the Taylor expansion of the r.h.s.\  as an operator acting on the section $T\phi^*f$.  Actually, for every section $\sigma$ of $\Hat ST^*M$ one can define
a  section $R(\sigma)$ of $T^*M\otimes \Hat ST^*M$
by taking the Taylor expansion of  $-\dd_y\sigma\circ(\dd_y\phi)^{-1}\circ\dd_x\phi$. Notice that $R$ is  $C^\infty(M)$\ndash linear. As a consequence we have a connection\footnote{This is the Grothendieck connection in the presentation given by the choice of the formal exponential map $\phi$.}
\[
(X,\sigma)\in \Gamma(TM)\otimes\Gamma(\Hat S T^*M) \mapsto i_XR(\sigma)\in \Gamma(\Hat S T^*M)
\]
on $\Hat S T^*M$. One can check that this connection is flat.  
We can also regard $R$ as a one-form on $M$ taking values in the bundle $\End(\Hat ST^*M)$.
Also notice that $\Hat S T^*M$ is a bundle of algebras and that $R$ acts as a derivation;  so we can regard $R$ as a one-form on $M$ taking values in the bundle $\Der(\Hat ST^*M)$,
which is tantamount to saying the bundle of formal vertical vector fields $\Hat \frX(TM):= TM\otimes \Hat S T^*M$. Notice that the flatness of the connection may be expressed as the MC (Maurer--Cartan) equation
\begin{equation}\label{MCR}
\dd_xR+\frac12[R,R]=0,
\end{equation}
where $[\ ,\  ]$ is the Lie bracket of vector fields.
Finally, equation \eqref{e:dphif} may now be expressed by saying
that $\dd\sigma + R(\sigma)=0$ if $\sigma$ is of the form $T\phi^*f$ for some $f$. Below we will see that also the converse is true.

We first extend this connection to a differential $D$ on the complex of $\Hat S T^*M$\ndash valued differential
forms $\Gamma(\Lambda^\bullet T^*M\otimes \Hat S T^*M)$.\footnote{\label{f:L}Since
$\Gamma(\Lambda^\bullet T^*M\otimes \Hat S T^*M)$ is the algebra of functions on the formal graded manifold $\calM:=T[1]M\oplus T[0]M$, the differential $D$ gives $\calM$ the structure of
a differential graded manifold. In particular since $D$ vanishes on the body, we may linearize at each $x\in M$ and get an $L_\infty$\ndash algebra structure on
$T_xM[1]\oplus T_xM\oplus T_xM$.}
The main result is that the cohomology of $D$ is concentrated in degree zero and $H^0_D=T\phi^* C^\infty(M)$. This can be easily seen working in local coordinates again:
\[
R(\sigma)_i = \frac{\de\sigma}{\de y^k}\,\left(\left(\frac{\de\phi}{\de y}\right)^{-1}\right)^k_j\,\frac{\de\phi^j}{\de x^i}.
\]
Using \eqref{e:phiT} we get $R = \delta + R'$ with $\delta=-\dd x^i\,\frac\de{\de y^i}$ and $R'$ a one-form on the base taking value in the vector fields vanishing at $y=0$. Hence we have $D=\delta+D'$ with
\[
D' = \dd x^i\,\frac\de{\de x^i}+R'.
\]
Notice that $\delta$ is itself a differential and that it decreases the polynomial degree in $y$, whereas
the operator $D'$ does not decrease this degree. The fundamental remark is that the cohomology of $\delta$ consists of zero forms constant in $y$. This is easily shown
by introducing $\delta^*:=y^i\,\iota_{\frac\de{\de x^i}}$ and observing that $(\delta\delta^*+\delta^*\delta)\sigma=k\sigma$ if $\sigma$ is an $r$\ndash form of degree $s$ in $y$ and $r+s=k$.
By cohomological perturbation theory the cohomology of $D$ is isomorphic to the cohomology of $\delta$,
which is what we wanted to prove.

Finally, observe that, if $\sigma$ is a $D$\ndash closed section, we can immediately recover the function $f$ for which $\sigma=T\phi^*f$ simply by setting $y=0$,
$f(x)=\sigma_x(0)$, as follows from \eqref{e:Tphistar}.

We can now extend the whole story to other natural objects. Let $\calV(M)$ denote the multivector fields on $M$ (i.e., sections of $\Lambda TM$), $\Omega(M)$ the differential forms, $\calW^j(M):=\Gamma(S^jTM)$ and $\calO^j(M):=\Gamma(S^jT^*M)$.
We use similar symbols for formal vertical vector fields $\Hat\calV(TM):=\Gamma(\Lambda TM\otimes \Hat S T^*M)$ and formal vertical differential forms
$\Hat\Omega(TM):=\Gamma(\Lambda T^*M\otimes \Hat S T^*M)$. We have injective maps
\[
T\phi^*:=T(\phi_*)^{-1}\colon \calV(M)\to\Hat\calV(TM),\qquad
T\phi^*\colon\Omega(M)\to\Hat\Omega(TM).
\]
Similarly, we set $\Hat\calW^j(TM):=\Gamma(S^jTM\otimes \Hat S T^*M)$ and $\Hat\calO^j(TM):=\Gamma(S^jT^*M\otimes \Hat S T^*M)$ and get
\[
T\phi^*:=T(\phi_*)^{-1}\colon \calW^j(M)\to\Hat\calW(TM),\qquad
T\phi^*\colon\calO^j(M)\to\Hat\calO^j(TM).
\]

We can now let $R$ naturally act on $\Hat\calV(TM)$, $\Hat\Omega(TM)$, $\Hat\calW^j(TM)$ and $\Hat\calO^j(TM)$
by Lie derivative and hence get a differential $D$ on the corresponding complexes of differential forms.
Notice that $D$ respects the
Gerstenhaber algebra structure (by the vertical Schouten--Nijenhuis bracket) of $\Hat\calV(TM)$ and the differential complex structure (by the vertical differential) of $\Hat\Omega(TM)$, so that
these structures are induced in cohomology. By the same argument as above, we get that all these cohomologies are concentrated in degree zero with
$H^0_D(\Hat\calV(TM))=T\phi^*\calV(M)$, $H^0_D(\Hat\Omega(TM))=T\phi^*\Omega(M)$, $H^0_D(\Hat\calW^j(TM))=T\phi^*\calW^j(M)$,
and $H^0_D(\Hat\calO^j(TM)=T\phi^*\calO^j(M)$.
Notice in particular that a section 
is in the image of $T\phi^*$ if and only if
\[
\dd_x\sigma + L_R\sigma =0.
\]
In order to recover, in local coordinates,  the global object 
corresponding to a solution to the above equation, we should only observe that by assumption
$\dd_y\phi_x(0)=\id$, so that it is enough to evaluate the components of $\sigma$ at $y=0$ and to replace formally each $\dd y^i$ by $\dd x^i$ and each
$\frac\de{\de y^i}$ by $\frac\de{\de x^i}$. More explicitly, if $\sigma_x(y)=\sigma_{x; i_1,\dots,i_n}(y)\,\dd y^{i_1}\cdots\dd y^{i_n}$ is equal to $T\phi^*\omega$,
then
\[
\omega(x)=\sigma_{x; i_1,\dots,i_n}(0)\,\dd x^{i_1}\cdots\dd x^{i_n}.
\]
If on the other hand,
$\sigma_x(y)=\sigma_x^{i_1,\dots,i_n}(y)\,\frac\de{\de y^{i_1}}\cdots\frac\de{\de y^{i_n}}$ is equal to $T\phi^*(Y)$, then
\[
Y(x)=\sigma_x^{i_1,\dots,i_n}(0)\,\frac\de{\de x^{i_1}}\cdots\frac\de{\de x^{i_n}}.
\]

One can immediately extend these results to direct sums of the vector bundles above.
Notice that cohomology also commutes with direct limits. This implies that the cohomology of $\prod_j\Hat\calW^j(TM)$ is also concentrated in degree zero
and coincides with $T\phi^*\prod_j\calW^j(M)$. Now we have that $\prod_j\calW^j(M)=\Gamma(\Hat STM)$ whereas
$\prod_j\Hat\calW^j(TM)=\Gamma(\Hat S(TM\otimes T^*M))$. Similarly, we see that the cohomology with values in $\Hat S(T^*M\otimes T^*M)$
is concentrated in degree zero and coincides with $T\phi^*\Gamma(\Hat ST^*M)$. To summarize:
\[
H^\bullet_D(\Gamma(\Hat S(TM\otimes T^*M)))=
H^0_D(\Gamma(\Hat S(TM\otimes T^*M)))=
 T\phi^*\Gamma(\Hat STM)
\]
and
\[
H^\bullet_D(\Gamma(\Hat S(T^*M\otimes T^*M)))=
H^0_D(\Gamma(\Hat S(T^*M\otimes T^*M)))=
 T\phi^*\Gamma(\Hat ST^*M).
\]

\subsection{Gauge transformations}\label{ss:gauge}
We now wish to consider the effects of changing the choice of formal exponential map. Namely, let $\phi$ be a family of formal exponential maps depending on a parameter $t$
belonging to an open interval $I$. We may associate to this family a formal exponential map $\psi$ for the manifold $M\times I$ by
$\psi(x,t,y,\tau):=((\phi)_{x,t}(y),t+\tau)$, where $\tau$ denotes the tangent variable to $t$. We want to define the associated connection $\Tilde R$: on a section $\Tilde\sigma$ of
$\Hat S T^*(M\times I)$ we have, by definition,
\[
\Tilde R(\Tilde\sigma)=-(\dd_y\Tilde\sigma,\dd_\tau\Tilde\sigma)\circ
\begin{pmatrix}
(\dd_y\phi)^{-1} & 0\\
0 & 1
\end{pmatrix}\circ
\begin{pmatrix}
\dd_x\phi & \dot\phi\\
0 & 1
\end{pmatrix}.
\]
So we can write $\Tilde R = R + C\,\dd t + T$ with $R$ defined as before (but now $t$\ndash dependent),
\[
C(\Tilde\sigma) = -\dd_y\Tilde\sigma\circ(\dd_y\phi)^{-1}\circ\dot\phi
\]
and $ T=-\dd t\;\frac\de{\de \tau}$. We now spell out the MC equation for $\Tilde R$ observing that $\dd_x T=\dd_t T= 0$ and that $T$ commutes with both $R$ and $C$. The
$(2,0)$\ndash form component over $M\times I$ yields again the MC equation for $R$, whereas the the $(1,1)$\ndash component reads
\[
\dot R = \dd_x C + [R,C].
\]
Hence, under a change of formal exponential map, $R$ changes by a gauge transformation with generator the section $C$ of  $\Hat \frX(TM)$.

Finally, if $\sigma$ is a section in the image of $T\phi^*$, then by a simple computation one gets
\[
\dot\sigma = - L_C\sigma,
\]
which can be interpreted as the associated gauge transformation for sections.

\section{PSM in formal coordinates}\label{s-PSMformal}
The Poisson sigma model (PSM) \cite{I,SS} is a topological field theory with source a two-manifold $\Sigma$ and
target a Poisson manifold $M$. Before getting to the BV action for the PSM, we fix the notations and introduce the AKSZ formalism \cite{AKSZ}
(for a gentle introduction to it, especially suited to the PSM, see \cite{CF-AKSZ}). Let $\Map(T[1]\Sigma,T^*[1]M)$ be the infinite dimensional graded manifold of maps from $T[1]\Sigma$ to $T^*[1]M$. It fibers over $\Map(T[1]\Sigma,M)$. We denote by
$\sfX$ a
``point''
of $\Map(T[1]\Sigma,M)$ and by $\sfeta$ a ``point''
of the fiber. In local target coordinates, the super fields $\sfX$ and $\sfeta$ have simple expressions:
\begin{align*}
\sfX^i &= X^i + \eta_{+}^i + \beta_{+}^i,\\
\sfeta_i &= \beta_i + \eta_i + X^{+}_i,
\end{align*}
where we have ordered the terms in increasing order of form degree on $\Sigma$. The ghost number is $0$ for $X$ and $\eta$, $1$ for $\beta$, $-1$ for $\eta_+$ and $X^+$, and $-2$ for $\beta_+$. As unperturbed BV action one considers
\[
S_0:=\int_\Sigma \sfeta_i\,\dd\sfX^i.
\]
Notice that it satisfies the classical master equation (CME) $(S_0,S_0)=0$ if $\Sigma$ has no boundary or if appropriate boundary conditions are taken (which we assume throughout). Here $(\ ,\ )$ is the BV bracket corresponding to the
odd symplectic structure on the space of fields for which the superfield $\sfeta$ is the momentum conjugate to the superfield $\sfX$.
Formally one may also assume $\Delta S_0=0$ where $\Delta$ is the BV operator, so $S_0$ satisfies also the quantum master equation (QME).

To perturb this action, we pick a multivector field $Y$ on $M$. We may regard it as a function on $T^*[1]M$. We then define $S_Y$ as the integral over $T[1]\Sigma$ of the pullback of $Y$
by the evaluation map
\[
\ev\colon T[1]\Sigma\times \Map(T[1]\Sigma,T^*[1]M)\to T^*[1]M.
\]
Explicitly, for a $k$\ndash vector field $Y$, we have
\[
S_Y = \frac1{k!}\int_\Sigma Y^{i_1,\dots,i_k}(\sfX)\sfeta_{i_1}\dots\sfeta_{i_k}.
\]
This construction has several interesting properties. First, $(S_0,S_Y)=0$ (for $\de\Sigma=\emptyset$ or with appropriate boundary conditions). Second, for any two multivector fields $Y$ and $Y'$, we have $(S_Y,S_{Y'})=S_{[Y,Y']}$. The BV action for the PSM with target Poisson structure $\pi$ is recovered as $S=S_0+S_\pi$. Notice that by the above mentioned properties it satisfies the CME. As for the quantum master equation, we refer to \cite{CFeff}, where it is shown that one can assume $\Delta S_\pi=0$  if the Euler characteristic of $\Sigma$ is zero or
if $\pi$ is unimodular. In the latter case, one picks a volume form $v$ on $M$ such that $\Div_v\pi=0$ and defines $\Delta$ according to it.

We consider $S_\pi$ as a perturbation, so we expand the functional integral around the critical points of $S_0$. They consist of closed superfields. In particular, the component $X$ of $\sfX$ will be a constant map, say with image $x\in M$. Fluctuations will explore only a formal neighborhood of $x$ in $M$, so as in \cite[\S 6]{CF},
it makes sense to make the change of variables
\[
\sfX = \phi_x(\sfA),\quad\sfeta=\dd\phi_x(\sfA)^{*,-1}\sfB,
\]
where $\phi$ is a formal exponential map and the new superfields $(\sfA,\sfB)$ are in $\Map(T[1]\Sigma, T^*[1]T_xM)$. Notice that this change of variables
\[
\phi_x\colon \Map(T[1]\Sigma, T^*[1]T_xM)\to \Map(T[1]\Sigma,T^*[1]M)
\]
is a local  symplectomorphism and
 that
 \[
 T\phi_x^* S_0=\int_\Sigma\sfB_i\,\dd\sfA^i.
 \]
 The moduli space of vacua
 (i.e., the space of critical points modulo gauge transformations)
 is now
 $\calH_x:=H^\bullet(\Sigma)\otimes T_xM\oplus H^\bullet(\Sigma)\otimes T^*_xM[1]$.
 (Here $H^\bullet(\Sigma)$ is regarded as a graded vector space with its natural grading.)
 We should regard $\calH=\bigcup_x \calH_x$ as a vector bundle over $M$, but for the moment we concentrate on a single $x$. Later on we will also consider the remaining integration over vacua and, in particular, over $M$ (which actually shows up as the space of constant maps
 $\Sigma\to M$); see Section~\ref{s:pf}.

We may repeat the AKSZ construction on $\Map(T[1]\Sigma, T^*[1]T_xM)$. In particular, if $Y$ is a function of degree $k$ on $T^*[1]T_xM$ (i.e., a formal vertical
$k$\ndash vector field), we may construct
a functional
\[
S_Y= \frac1{k!}\int_\Sigma Y^{i_1,\dots,i_k}(\sfA)\sfB_{i_1}\dots\sfB_{i_k}.
\]
In particular, we have
\[
T\phi_x^*S_\pi=S_{T\phi_x^*\pi}.
\]
As a result, we have a solution $S_x:=T\phi_x^*S$ of the QME and may compute its partition function $Z_x$ (as a function on $\calH_x$) upon integrating over
 a Lagrangian submanifold $\calL$ of a complement of $\calH_x$ in $\Map(T[1]\Sigma, T^*[1]T_xM)$:
 \[
 Z_x:=\int_\calL\EE^{\frac\ii\hbar S_x}.
 \]
Notice \cite{M,CFeff} that there is an induced BV operator $\Delta$ on $\calH_x$ and that $Z_x$ satisfies $\Delta Z_x=0$. Moreover, upon changing the gauge fixing $\calL$, $Z_x$ changes
by a $\Delta$\ndash exact term. We wish to compare
the class of $Z_x$ 
with the globally defined partition function morally obtained by integrating in
$\Map(T[1]\Sigma,T^*[1]M)$. For this we have to understand the collection $\{Z_x\}_{x\in M}$ as a section
$\Hat Z\colon x\mapsto Z_x$
of $\Hat S\calH^*$ (we hide the dependency on $\hbar$ here)
and compute how it changes over $M$. Using all properties above and setting $\Hat S\colon x\mapsto S_x$,
we get\footnote{The choice of $\calL$ might be different for different $x$s, but for simplicity we assume it not to be the case.}
\[
\dd_x \Hat Z = \frac\ii\hbar\int_\calL \EE^{\frac\ii\hbar \Hat S}\, S_{\dd_x T\phi^*\pi}=
-\frac\ii\hbar\int_{\calL} \EE^{\frac\ii\hbar \Hat S}\,(S_R,\Hat S).
\]
Notice that this may also be rewritten as 
\[
\dd_x \Hat Z= -\Delta\int_{\calL} \EE^{\frac\ii\hbar \Hat S}\, S_R
\]
if we assume $\Delta S_R=0$. This is correct if $\Sigma$ has zero Euler characteristic or if $\Div_{T\phi^*v}R=0$. From the equation
$\dd_x T\phi^*v+L_RT\phi^*v=0$, we see that the latter condition is satisfied if and only if $\dd_x T\phi^* v=0$. Given a volume form $v$, it is always possible to find a formal exponential map $\phi$ satisfying this condition; actually, one can even get $T\phi_x^*v=\dd y^1\dots\dd y^d$ $\forall x$.

We can collect the above identities nicely if we define\footnote{For $\pi=0$, $\Tilde S$ is also the BV action for the $BF_\infty$\ndash theory \cite{M} with target the $L_\infty$ algebra of footnote~\ref{f:L}.
See also \cite{CG}.} 
\[
\Tilde S:= \Hat S + S_R.
\]
Notice that $\Tilde S$  is of total degree zero (the term $S_R$ has ghost number minus one  but is a one-form on $M$) and satisfies the modified CME
\[
\dd_x \Tilde S  + \frac12(\Tilde S,\Tilde S)=0
\]
and by assumption also $\Delta \Tilde S=0$ (so it satisfies a modified QME as well). We then define
\begin{equation} \label{SeffBVintegral}
\Tilde Z := \int_\calL\EE^{\frac\ii\hbar \Tilde S}
\end{equation}
as a nonhomogeneous differential form on $M$ taking values in $\calH$. It satisfies
\[
\dd_x \Tilde Z -\ii\hbar\Delta\Tilde Z =0.
\]
\begin{Rem}\label{r:changephi}
We are now also in a position to understand the change of $\Tilde Z$ under a change of the  formal exponential map. Using the results of subsection~\ref{ss:gauge}, we immediately
see that
\[
\dot{\Tilde Z} = (\dd_x-\ii\hbar\Delta)\int_\calL\EE^{\frac\ii\hbar \Tilde S}\,\frac\ii\hbar S_C,
\]
assuming $\Delta S_C=0$. The assumption is verified if $\Sigma$ has zero Euler characteristic or if we let $\phi$ vary only in the class of formal exponential maps that make $T\phi^*v$ constant.
Notice that the space of such formal exponential maps is connected. Therefore, the class of $\Tilde Z$
under these transformations is independent of all choices needed to compute it.
\end{Rem}

Finally, we consider the effective action $\Tilde S_\text{eff}$ defined by the identity $\Tilde Z = \EE^{\frac\ii\hbar\Tilde S_\text{eff}}$. It is a differential form taking values in $\Hat S\calH^*[[\hbar]]$ and
satisfies the modified QME
\begin{equation}\label{e:mQME}
\dd_x\Tilde S_\text{eff} +\frac12(\Tilde S_\text{eff},\Tilde S_\text{eff})-\ii\hbar\Delta\Tilde S_\text{eff}=0.
\end{equation}
This equation formally follows from the properties of BV integrals. In the case when $\Sigma$ has no boundary, it may be proved directly by considering the expansion of $\Tilde S_\text{eff}$ in Feynman diagrams
and applying the usual Stokes theorem techniques on integrals over configuration spaces, see \cite{K, CM}.
If $\Sigma$ has a boundary, additional terms corresponding to several points collapsing to the boundary together may appear and spoil \eqref{e:mQME}. From now on we therefore assume that $\Sigma$ is closed.

Equation  \eqref{e:mQME}
contains both information on the QME satisfied by the zero-form component $\Tilde S_\text{eff}^{(0)}$ and on its global properties. The equations are in general mixed: we do not simply get a flat connection with respect to which $\Tilde S_\text{eff}^{(0)}$ is covariantly constant. However, it is possible to find a modified quantum BV canonical transformation that produces a flat connection with respect to which the zero form part of the effective action is horizontal and hence global; we postpone this discussion to Section~\ref{s:global}. In the remaining of this Section, we concentrate on two special cases where the general theory is not needed.

The first special case is when $\Sigma$ is a torus.  In Section~\ref{s:caxial}, see Lemma~\ref{l-vanishing},  we will show that, in an appropriate gauge, there are no quantum corrections, so
\[
\Tilde S_\text{eff} = \Tilde S_\text{eff}^{(0)} + \Tilde S_\text{eff}^{(1)},
\]
where $\Tilde S_\text{eff}^{(0)}$ is the zero-form obtained by restricting $S_{T\phi^*\pi}$  to vacua and $\Tilde S_\text{eff}^{(1)}$ is the one-form obtained
by restricting $S_R$ to vacua. One can explicitly check, see Section~\ref{s:pf}, that $\Delta \Tilde S_\text{eff}^{(0)} =\Delta \Tilde S_\text{eff}^{(1)} =0$.
Hence the modified QME now simply yields the CME for $\Tilde S_\text{eff}^{(0)} $,
\[
(\Tilde S_\text{eff}^{(0)},\Tilde S_\text{eff}^{(0)})=0,
\]
the flatness condition for $ \Tilde S_\text{eff}^{(1)} $,
\[
\dd_x\Tilde S_\text{eff}^{(1)} +\frac12(\Tilde S_\text{eff}^{(1)},\Tilde S_\text{eff}^{(1)})=0,
\]
and the fact that $\Tilde S_\text{eff}^{(0)} $ is covariantly constant,
\[
\dd_x\Tilde S_\text{eff}^{(0)} +(\Tilde S_\text{eff}^{(1)},\Tilde S_\text{eff}^{(0)})=0.
\]
Now notice that $(\Tilde S_\text{eff}^{(1)} ,\ )$ is just the natural action of $R$ on the sections of $\calH$. Hence we can conclude that  $\Tilde S_\text{eff}^{(0)}$ is
just $T\phi^* (S|_\text{vacua})$.

The second special case is when $\pi$ is regular and unimodular (and $\Sigma$ is any two-manifold). Also in Section \ref{s:caxial} we will show that, upon choosing an appropriate formal exponential map,
$S_\text{eff}^{(0)}$ and $S_\text{eff}^{(1)}$ have no quantum corrections. Therefore, we may use the same reasoning as above and conclude that $\Tilde S_\text{eff}^{(0)}$ is
just $T\phi^* (S|_\text{vacua})$.

A final important remark is that in the two cases above
 the effective action 
depends polynomially on all vacua but, possibly, for those related to the $X$ field; therefore,
$S_\text{eff}^{(0)}$ is a section of $S\Tilde\calH^*\otimes\Hat ST^*M$, where
$\Tilde\calH_x=H^{>0}(\Sigma)\otimes T_xM\oplus H^\bullet(\Sigma)\otimes T^*_xM[1]$ is the moduli space of vacua excluding those for $X$. The corresponding global effective action $S|_\text{vacua}$ will then be a section of $S\Tilde\calH^*$, i.e., a function on the vector bundle $\Tilde\calH$ (polynomial in the fibers).
Notice that this vector bundle is diffeomorphic, by choosing a connection (e.g., the one contained in the choice of $\phi$), to the natural global definition of the moduli space of vacua as presented, e.g., in \cite{BMZ}.

\section{Some computations of the effective action}\label{s:caxial}
In this Section we discuss the perturbative computation for the effective action and show that it has no quantum corrections in two important cases.
\subsection{Factorization of Feynman graphs}
Consider the effective action $ \Tilde S_\text{eff}$ defined in \eqref{SeffBVintegral} by the identity $\Tilde Z = \EE^{\frac\ii\hbar\Tilde S_\text{eff}}$.
Here the Lagrangian subspace $\LL$ in the complement of $\HH_x$ inside the space of fields
\be \FF_x = \Maps(T[1]\Sigma,T^*[1]T_x M)\cong \Omega^\bt(\Sigma)\otimes (T_x M \oplus T^*_x[1] M) \label{F_x}\ee
accounts for the gauge fixing. Let $\LL$ have the factorized form
\be \LL= \LL_K \otimes (T_x M \oplus T^*_x[1] M) \label{L factorization}\ee
where $\LL_K\subset \Omega^\bt(\Sigma)$ is defined as
$$\LL_K=\ker\PP\cap \ker K$$
with $\PP$  the projector from differential forms on $\Sigma$ to (the chosen representatives of) de Rham cohomology of $\Sigma$
and
$$K\colon \Omega^\bt(\Sigma)\ra \Omega^{\bt-1}(\Sigma)$$
a linear operator satisfying
\be \dd K+ K\dd = \id- \PP, \qquad \PP K = K \PP=0,\qquad K^T=K, \qquad K^2=0
\label{K properties}
\ee
(i.e., $K$ is the chain homotopy between identity and the projection to cohomology, also known as a parametrix). The transpose is w.r.t.\ the Poincar\'e pairing on forms $\int_\Sigma \bt\wedge\bt$. We assume the operator $K$ (which now determines the gauge fixing) to be an integral operator
with a distributional integral kernel $\omega\in\Omega^1(\Sigma\times \Sigma)$ -- the propagator. An explicit construction may be done along the same lines as in \cite{BC,C}.
Let us introduce a basis $\{\chi_\alpha\}$ in the cohomology space $H^\bt(\Sigma)$; denote the matrix of
the Poincar\'e pairing by
$\Pi_{\alpha\beta}=\int_\Sigma \chi_\alpha\wedge \chi_\beta$. In terms of $\omega$, properties (\ref{K properties}) read:
\begin{enumerate}
\item $\dd\omega = \delta_\mr{diag}-\sum_{\alpha,\beta} (\Pi^{-1})^{\alpha\beta}\chi_\alpha\otimes \chi_\beta$,
where $\delta_\mr{diag}$ is the delta-form supported on the diagonal of $\Sigma\times\Sigma$;
\item $\int_{\Sigma_{(1)}}\omega\,\pi_1^*\chi_\alpha=\int_{\Sigma_{(2)}}\omega\,\pi_2^*\chi_\alpha=0,\ \forall\alpha$,
where  $\Sigma_{(1)}$ and
$\Sigma_{(2)}$ denote the two factors of $\Sigma\times\Sigma$, and
$\pi_1$ and $\pi_2$ are the two projections from $\Sigma\times\Sigma$ to its factors;
\item $t^*\omega=\omega$, where $t\colon \Sigma\times\Sigma \ra \Sigma\times\Sigma$ is the map swapping the two copies of $\Sigma$;
\item $\int_{\Sigma_{(2)}}\pi_{12}^*\omega\,\pi_{23}^*\omega=0$, where $\Sigma_{(2)}$ denotes the middle factor
in $\Sigma\times\Sigma\times\Sigma$, whereas $\pi_{12}$ and $\pi_{23}$ are the projections from
$\Sigma\times\Sigma\times\Sigma$ to the first two and last two factors, respectively.
\end{enumerate}
Notice
that the restriction of $\omega$ to the configuration space
$C_2^0(\Sigma):=\{(u,v)\in\Sigma:u\not=v\}$ is smooth and it extends to the
Fulton--MacPherson--Axelrod--Singer compactification as a smooth form.
In \cite{CM}  it is shown how to implement the property $\PP K = K \PP=0$
on the propagator. Once this is done, the propagator will also satisfy the property
$K^2=0$. This is proved exactly as in \cite[Lemma 10]{CFeff}.

The perturbation expansion for the effective action (\ref{SeffBVintegral}) has the form
\be S^\mr{eff}(\mathsf{A}_\mr{z.m.},\mathsf{B}_\mr{z.m.};\hbar)=\sum_\Gamma \frac{(i\hbar)^{l(\Gamma)}}{|\mr{Aut}(\Gamma)|}\;\; W^\mr{target}_\Gamma (\mathsf{A}_\mr{z.m.},\mathsf{B}_\mr{z.m.})\cdot W_\Gamma^\mr{source}
\label{S_eff pert expansion}\ee
where the sum is over connected oriented graphs $\Gamma$ with leaves\footnote{A leaf for us is a loose half-edge, i.e., one not connected to another half-edge to form an edge.} decorated by basis cohomology classes $\{\chi_\alpha\}$; $l(\Gamma)$ stands for the number of loops, $|\mr{Aut}(\Gamma)|$ is the number of graph automorphisms; $\{\mathsf{A}_\mr{z.m.},\mathsf{B}_\mr{z.m.}\}=\{\mathsf{A}^{\alpha i}, \mathsf{B}^\alpha_i\}$ are the coordinates on the moduli space of vacua
$\HH_x$.

The ``target part'' $W^\mr{target}_\Gamma$ of the contribution of a graph to $\Tilde S_\text{eff}$ is a homogeneous polynomial function on $\HH_x$ of degree equal to the number of leaves, computed using the following set of rules:
\begin{enumerate}
\item to an incoming leaf of $\Gamma$ decorated by $\chi_\alpha$ one associates $\mathsf{A}^{\alpha i}$
\item to an outgoing leaf decorated by $\chi_\alpha$ one associates $\mathsf{B}^\alpha_i$
\item to a vertex with $m$ inputs and $n$ outputs one associates the expression
$$\partial_{i_1}\cdots\partial_{i_m} Y^{j_1\cdots j_n}$$
-- the $m$-th derivative of $n$-vector\footnote{In the standard setup for the Poisson sigma model, only the perturbation by Poisson bivector field $Y=\pi^{ij}\partial_i\wedge\partial_j$ is present in the action, hence all vertices have to have exactly two outputs, otherwise the graph does not contribute. In the present case, we also have the vector field $R$.} contribution to the action $S$.
\item for every edge contract the dummy Latin indices for the two constituent half-edges.
\end{enumerate}
The result of contraction is a polynomial function on $\HH_x$.

The ``source'' (or ``de Rham'') part $W^\mr{source}_\Gamma$ is a number defined as
\be W^\mr{source}_\Gamma=\int_{\Sigma^{\times V(\Gamma)}}\left(\prod_{\mr{edges}\; (h_\mr{in},h_\mr{out})}\pi^*_{v(h_\mr{in}),v(h_\mr{out})}\omega \right)\cdot \left(\prod_{\mr{leaves}\; l}\pi^*_{v(l)} \chi_{\alpha_l}\right) \ee
where $V(\Gamma)$ is the number of vertices, $\pi_v: \Sigma^{\times V(\Gamma)}\ra \Sigma$ is the projection to $v$-th copy of $\Sigma$, $\pi_{u,v}:\Sigma^{\times V(\Gamma)}\ra \Sigma\times \Sigma$ is the projection to $u$-th and $v$-th copies of $\Sigma$; $v(h)$ is the vertex incident to the half-edge $h$.
Notice that all these integrals converge. The usual way to show this is to observe that the integrals
are actually defined on configuration spaces (i.e., the complements of all diagonals in the Cartesian products
of copies of $\Sigma$) and that the propagators $\omega$ extend to their compactifications.

\begin{Rem} The factorization into source- and target contributions for Feynman diagrams in the expansion (\ref{S_eff pert expansion}) is due to the factorization of the space of fields (\ref{F_x}) and to the fact that our ansatz for the gauge fixing (\ref{L factorization}) is compatible with this factorization.
\end{Rem}

\begin{Rem} The orientation of $\Gamma$ is irrelevant for the source parts $W^\mr{source}_\Gamma$.
\end{Rem}

\subsection{Regular Poisson structures}
If $\pi$ is nondegenerate, it is always possible to find a formal exponential map $\phi$ such that $T\phi^*\pi$ is constant (in the $y$ variables).
One simply has to go to formal Darboux coordinates.
Notice, moreover, that $\Div_v\pi=0$ if for $v$ one chooses $v$ to be the Liouville volume form $\omega^k/k!$,
$k=\dim M/2$. It then follows that $T\phi^*v$ is also constant
and that $\Div_{T\phi^* v}R=0$.
A slight generalization occurs when $\pi$ is regular (i.e., its kernel has constant rank) and unimodular (notice that this is not guaranteed if $\pi$ is degenerate \cite{W}). After choosing $v$ such that $\Div_v\pi=0$, it is again possible to find a formal exponential map $\phi$ such that $T\phi^*\pi$ and $T\phi^*v$ are both constant and hence
$\Div_{T\phi^* v}R=0$.

In the perturbative expansion, we may thus assume that we have a bivalent vertex, corresponding to $T\phi^*\pi$, with no incoming arrows. If one of the outgoing arrows is replaced by a vacuum mode (i.e., a cohomology class),
the result is zero by the property $\PP K = K \PP=0$, otherwise it is zero by the property $K^2=0$.
As a result, every graph containing a $T\phi^*\pi$\ndash vertex will vanish, apart from the one with both outgoing arrows evaluated on vacua. As a consequence $S_\text{eff}^{(0)}$ and $S_\text{eff}^{(1)}$ have no quantum corrections.

\subsection{Axial gauge on the torus $\Sigma=\T^2:=S^1\times S^1$}\label{s:axial}
In the case of a torus, differential forms have a bigrading with respect to the two circles.
One may choose the axial gauge\footnote{The axial gauge for topological field theories was originally proposed in the context of Chern--Simons theory in \cite{FK}.} 
by setting the superfields to vanish if they have nonzero degree with respect to the first circle. Na\" ively this implies the propagator to be the product
of a propagator for the de Rham differential on the first circle and the identity operator on the second circle (just plug in the gauge fixed fields into in the unperturbed action to realize this). This argument however does not take vacua into account nor the fact that the axial gauge fixing does not fix all the gauge freedom. In fact, one can prove that the
propagator in the axial gauge has one additional term, see \eqref{K torus axial} below.

To start with a rigorous construction of the propagator, observe that
differential forms on a circle admit the Hodge decomposition
$$\Omega^\bt(S^1)=\underbrace{\Omega^\bt_\mr{Harm}(S^1)}_{\mr{Span}(1,d\tau)}\oplus \underbrace{\tilde\Omega^0(S^1)}_{\{f(\tau)\, | \, \int_{S^1} f(\tau)d\tau=0\}}\oplus \underbrace{\tilde\Omega^1(S^1)}_{\{g(\tau)d\tau\, | \, \int_{S^1}g(\tau)d\tau=0\}}$$
(In our convention the coordinate $\tau$ on the circle runs from $0$ to $1$).
The associated chain homotopy operator is
$$K_{S^1}: g(\tau)d\tau \mapsto \int_{S^1}\omega_{S^1}(\tau,\tau')g(\tau')d\tau'$$
with the integral kernel
$$\omega_{S^1}(\tau,\tau')=\theta(\tau-\tau')-\tau+\tau'-\frac{1}{2}$$
Projection to harmonic forms on the circle (representatives of cohomology) is
$$\PP_{S^1}: f(\tau)+g(\tau)d\tau \mapsto \int_{S^1}(d\tau'-d\tau)\wedge (f(\tau')+g(\tau')d\tau')$$

For the torus we may decompose the de Rham complex in the following way:
\begin{multline}
\Omega^\bt(S^1\times S^1)=\Omega^\bt(S^1)\hat\otimes \Omega^\bt(S^1) = \\
=\underbrace{\Omega^\bt_\mr{Harm}(S^1)\otimes\Omega^\bt_\mr{Harm}(S^1)}_{\cong H^\bt(S^1\times S^1)} \oplus \underbrace{\tilde\Omega^0(S^1)\hat\otimes \Omega^\bt \oplus
\Omega^\bt_\mr{Harm}(S^1)\otimes \tilde\Omega^0(S^1)}_{\LL_K} \oplus \\
\oplus\tilde\Omega^1(S^1)\hat\otimes \Omega^\bt \oplus
\Omega^\bt_\mr{Harm}(S^1)\otimes \tilde\Omega^1(S^1)
\end{multline}
The associated chain homotopy operator is
\be K=\underbrace{K_{S^1}\otimes \id_{S^1}}_{K_I}+ \underbrace{\PP_{S^1}\otimes K_{S^1}}_{K_{II}} \colon \Omega^\bt(S^1\times S^1)\ra \Omega^{\bt-1}(S^1\times S^1)  \label{K torus axial}\ee
Its integral kernel (the propagator) is
\begin{multline}
\omega=\underbrace{\left(\theta(\sigma-\sigma')-\sigma+\sigma'-\frac{1}{2}\right)\cdot \delta(\tau-\tau')\cdot (d\tau'-d\tau)}_{\omega_{I}}+\\
+\underbrace{(d\sigma'-d\sigma)\cdot \left(\theta(\tau-\tau')-\tau+\tau'-\frac{1}{2}\right)}_{\omega_{II}}
\label{propagator torus axial}
\end{multline}
where we denote by $\sigma,\tau\in \mathbb{R}/\mathbb{Z}$ the coordinates on the first and the second circles, respectively.

\begin{Rem} The chain homotopy (\ref{K torus axial}) arises from the composition of two 
quasi-isomorphisms:{\tiny
$$
\begin{CD}
\Omega^\bt(S^1)\hat\otimes\Omega^\bt(S^1) @= \Omega^\bt_\mr{Harm}(S^1)\otimes \Omega^\bt(S^1)\oplus \tilde\Omega^0(S^1)\hat\otimes \Omega^\bt(S^1) \oplus \tilde\Omega^1(S^1)\hat\otimes \Omega^\bt(S^1) \\
@VVV \\
\Omega^\bt_\mr{Harm}(S^1)\otimes \Omega^\bt(S^1) @= \Omega^\bt_\mr{Harm}\otimes \Omega^\bt_\mr{Harm} \oplus \Omega^\bt_\mr{Harm}(S^1)\otimes \tilde\Omega^0(S^1)\oplus  \Omega^\bt_\mr{Harm}(S^1)\otimes \tilde\Omega^1(S^1)\\
@VVV \\
\Omega^\bt_\mr{Harm}(S^1)\otimes \Omega^\bt_\mr{Harm}(S^1)
\end{CD}
$$}
i.e., we first contract the first circle to cohomology, then the second one.
\end{Rem}

\subsection{Vanishing of quantum corrections}
\begin{Lem}\label{l-vanishing}
For the Poisson sigma model in the axial gauge on torus, the source parts $W^\mr{source}_\Gamma$ vanish for all connected graphs $\Gamma$ except for trees with one vertex (``corollas'').
\end{Lem}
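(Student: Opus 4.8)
The plan is to expand every edge propagator via $\omega=\omega_I+\omega_{II}$ from (\ref{propagator torus axial}), so that $W^\mr{source}_\Gamma$ becomes a finite sum over the assignments of a label $I$ or $II$ to each edge of $\Gamma$. The point of the axial gauge is that the two pieces factor through the two circles in opposite ways: writing $g(u,u'):=\theta(u-u')-u+u'-\frac12=\omega_{S^1}(u,u')$ for the (antisymmetric) circle propagator, one has $\omega_I=g(\sigma,\sigma')\,\delta(\tau-\tau')\,(\dd\tau'-\dd\tau)$, a pure-$\sigma$ function times a pure-$\tau$ form, and $\omega_{II}=(\dd\sigma'-\dd\sigma)\,g(\tau,\tau')$, a pure-$\sigma$ form times a pure-$\tau$ function; each cohomology decoration $\chi_\alpha\in\{1,\dd\sigma,\dd\tau,\dd\sigma\wedge\dd\tau\}$ likewise factors. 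First I would reorder the integrand of a fixed labelling (up to a sign that is irrelevant for vanishing) into a $\sigma$-part wedged with a $\tau$-part on $(S^1)^{\times V(\Gamma)}\times(S^1)^{\times V(\Gamma)}$ and apply Fubini, so that the labelled contribution becomes $\pm\,I_\sigma\cdot I_\tau$. Here $I_\sigma$ carries the functions $g(\sigma_a,\sigma_b)$ on the $I$-edges together with the forms $(\dd\sigma_b-\dd\sigma_a)$ on the $II$-edges and the $\dd\sigma$'s from leaves, while $I_\tau$ carries the forms $(\dd\tau_b-\dd\tau_a)$ (each multiplied by $\delta(\tau_a-\tau_b)$) on the $I$-edges together with the functions $g(\tau_a,\tau_b)$ on the $II$-edges and the $\dd\tau$'s from leaves.

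Let $\Gamma_I$ and $\Gamma_{II}$ be the spanning subgraphs formed by the $I$- and the $II$-edges. The next step is two linear-dependence observations. If $\Gamma_{II}$ contains a cycle $v_1-\cdots-v_k-v_1$, the $1$-forms $(\dd\sigma_{v_2}-\dd\sigma_{v_1}),\dots,(\dd\sigma_{v_1}-\dd\sigma_{v_k})$ sum to zero, hence are linearly dependent, so their wedge vanishes and $I_\sigma=0$. Symmetrically, if $\Gamma_I$ contains a cycle, the $\tau$-forms $(\dd\tau_b-\dd\tau_a)$ around it again sum to zero, so their wedge vanishes identically and kills $I_\tau$ irrespective of the accompanying factors $\delta(\tau_a-\tau_b)$ (these are $0$-currents and carry no form degree). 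Consequently only labellings for which both $\Gamma_I$ and $\Gamma_{II}$ are forests can survive.

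The decisive input is the orthogonality of the circle propagator to cohomology, $\int_{S^1}g(u,u')\,\dd u=0$ (by construction $K_{S^1}$ takes values in $\tilde\Omega^0(S^1)$). Suppose $\Gamma_I$ has an edge; being a nonempty forest it has a vertex $v$ of valence one in $\Gamma_I$, joined to some $v'$ by its unique $I$-edge. In $I_\sigma$ the only $\sigma_v$-function-dependence is the single factor $g(\sigma_v,\sigma_{v'})$ (the $II$-edges and leaves at $v$ contribute $\sigma_v$-forms but no functions), while a nonzero contribution needs exactly one $\dd\sigma_v$ (two would wedge to zero, none would fail to saturate the top degree); integrating out $\sigma_v$ then produces the factor $\int_{S^1}g(\sigma_v,\sigma_{v'})\,\dd\sigma_v=0$, so $I_\sigma=0$. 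Hence a surviving labelling must have $\Gamma_I$ empty, i.e.\ all edges of type $II$. But then $\Gamma_{II}=\Gamma$ is a nonempty forest (we assumed $\Gamma$ has an edge), and the identical valence-one argument applied to $I_\tau$---now with the functions $g(\tau_a,\tau_b)$ living on the $II$-edges and using $\int_{S^1}g(\tau_v,\tau_{v'})\,\dd\tau_v=0$---forces $I_\tau=0$. Thus every labelling of every connected $\Gamma$ with at least one edge contributes zero, and $W^\mr{source}_\Gamma=0$ unless $\Gamma$ has no internal edges, i.e.\ is a corolla; for a corolla $W^\mr{source}_\Gamma=\int_\Sigma\prod_l\chi_{\alpha_l}$, which is generically nonzero. (In particular a one-vertex tadpole is not a corolla and is killed already at the first step, since a self-loop contributes either $g(\sigma_v,\sigma_v)=0$ or $(\dd\sigma_v-\dd\sigma_v)=0$.)

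The combinatorics above are elementary; the one place that needs genuine care---and which I expect to be the main obstacle---is the distributional bookkeeping hidden in the factor $\delta(\tau-\tau')$ of $\omega_I$: both the $\sigma/\tau$ factorization and the dependent-forms argument for a $\Gamma_I$-cycle must be justified for the singular kernel. As with the convergence of the $W^\mr{source}_\Gamma$ themselves, the clean way is to work on the Fulton--MacPherson--Axelrod--Singer compactification of the configuration spaces, where $\omega$ extends smoothly; there the $\tau$-forms around a cycle restrict to honestly linearly dependent $1$-forms and the factorization is the ordinary Fubini theorem on the product $\T^2=S^1\times S^1$. Everything in the $\sigma$-sector, and every $I_\tau$ occurring after $\Gamma_I$ has been forced empty, is already delta-free and elementary.
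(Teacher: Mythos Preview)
Your argument is correct and proceeds by a genuinely different route than the paper's. The paper introduces a full decoration of half-edges by bidegrees together with the $I/II$ edge labels, extracts five combinatorial constraints (i)--(v), and then runs an Euler-characteristic count to show each connected component of $\Gamma_I$ is a tree or a one-loop graph; constraint (iv) (no vertex of $I$-valence one) then forces ``wheel or corolla'', and the wheel is killed by the same linear-dependence identity you use. You invert the order of the two mechanisms: first kill cycles in $\Gamma_I,\Gamma_{II}$ by linear dependence of the $(\dd\tau_b-\dd\tau_a)$ resp.\ $(\dd\sigma_b-\dd\sigma_a)$, and then exploit a valence-one vertex of the resulting forest to apply $\int_{S^1}g=0$ directly. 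This bypasses the half-edge decoration bookkeeping and the Euler-characteristic step entirely, and isolates very cleanly the two inputs (cycle $\Rightarrow$ dependent $1$-forms; leaf $\Rightarrow$ $\int g=0$). The paper's constraint (iv) is of course exactly the contrapositive of your leaf step, so the underlying ingredients are the same; your packaging is simply more economical.

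One correction is needed at the distributional step. Your proposed FMAS justification does not work as stated: while the full propagator $\omega$ extends smoothly to the compactified configuration space, its bidegree pieces $\omega_I$ and $\omega_{II}$ do \emph{not} extend smoothly separately---the explicit $\delta(\tau-\tau')$ in $\omega_I$ persists, so on FMAS you cannot treat the $I$-labelled factors as honest smooth forms and then invoke ordinary linear dependence. The paper meets the identical $0\cdot\delta(0)$ ambiguity at the same point and resolves it instead by smearing $\delta(\tau-\tau')$ to a smooth bump; your cycle argument then goes through verbatim, since it uses only the form identity $(\dd\tau_{v_1}-\dd\tau_{v_2})\wedge\cdots\wedge(\dd\tau_{v_k}-\dd\tau_{v_1})=0$ and no chain-homotopy property. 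With this single change in justification your proof is complete.
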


\textbf{Proof.} Let us introduce the basis in cohomology of the torus:
$$\chi_{(0,0)}=1,\quad \chi_{1,0}=d\sigma,\quad \chi_{(0,1)}=d\tau, \quad \chi_{(1,1)}=d\sigma\wedge d\tau$$
 Define a decoration $c$ of $\Gamma$ as an assignment of bidegree
 \[
 c(h)\in\{(0,0),(1,0),(0,1),(1,1)\}
 \]
 to each half-edge $h$ of $\Gamma$ (so that on leaves the bidegree coincides with the prescribed leaf decoration $\alpha$) together with an assignment of an index $c(e)\in \{I, II\}$ to each edge $e$. Define the source part for a decorated graph $\Gamma$ as
 \be
 W^\mr{source}_{\Gamma,c}=\int_{\Sigma^{\times V(\Gamma)}}\left(\prod_{\mr{edges}\; e=(h_\mr{in},h_\mr{out})}\pi^*_{v(h_\mr{in}),v(h_\mr{out})}\omega_{c(e)}|_c \right)\cdot \left(\prod_{\mr{leaves}\; l}\pi^*_{v(l)} \chi_{\alpha_l}\right) \label{W^source_c}
 \ee
where the $\omega|_c$ symbol means the component of the propagator (as an element of $\Omega^\bt(S^1\times S^1)\hat\otimes \Omega^\bt(S^1\times S^1)$) of de Rham bidegrees $c(h_{in}), c(h_{out})$ where $h_{in},h_{out}$ are the constituent half-edges of the edge; $\omega_{c(e)}$ is one of the two pieces of propagator, $\omega_I$ or $\omega_{II}$, as defined in (\ref{propagator torus axial}). Then we have
$$W^\mr{source}_\Gamma = \sum_{\mr{decorations}\; c}W^\mr{source}_{\Gamma,c}$$
The source part $W^\mr{source}_{\Gamma,c}$ vanishes automatically unless the following conditions are satisfied simultaneously:
\begin{enumerate}[(i)]
\item At every vertex there is exactly one incident half-edge decorated by $(1,\bullet)$, all others are $(0,\bullet)$. \label{decoration restrictions i}
\item At every vertex there is exactly one incident half-edge decorated by $(\bullet,1)$, all others are $(\bullet,0)$. (This half-edge may be the same as in (\ref{decoration restrictions i})).
    \label{decoration restrictions ii}
\item Compatibility between edge decorations and half-edge decorations: for any edge $e=(h_1,h_2)$ we have
\begin{eqnarray*}
(c(e)=I) \Longrightarrow \left[\begin{array}{l}
c(h_1)=(0,0), c(h_2)=(0,1) \qquad\mbox{or}\\ c(h_1)=(0,1), c(h_2)=(0,0)
\end{array}\right. \\
(c(e)=II) \Longrightarrow \left[\begin{array}{l}
c(h_1)=(0,0), c(h_2)=(1,0) \qquad\mbox{or}\\ c(h_1)=(1,0), c(h_2)=(0,0)
\end{array}\right.
\end{eqnarray*}
\label{decoration restrictions iii}
\item Number of edges decorated as $I$ adjacent to any given vertex should be different from one.
\label{decoration restrictions iv}
\item If a vertex has no adjacent $I$-edges, then the number of adjacent $II$-edges should be different from one.
\label{decoration restrictions v}
\end{enumerate}
Requirements (\ref{decoration restrictions i},\ref{decoration restrictions ii}) follow directly from degree counting in (\ref{W^source_c}); (\ref{decoration restrictions iii}) follows from the formula for propagator (\ref{propagator torus axial}); (\ref{decoration restrictions iv},\ref{decoration restrictions v}) follow from the property $K_{S^1}\PP_{S^1}=0$ and from the fact that harmonic forms on a circle are closed under wedge multiplication.

Fix some decoration $c$ of $\Gamma$ satisfying (\ref{decoration restrictions i}--\ref{decoration restrictions v}). Consider the subgraph $\Gamma_I$ of $\Gamma$ obtained by deleting all $II$-edges in $\Gamma$; $\Gamma_I$ may be disconnected. Let $\Gamma_I=\sqcup_a \Gamma_I^a$ where $\Gamma_I^a$ are the connected components of $\Gamma_I$.
Due to (\ref{decoration restrictions ii}), the number of vertices $V_I^a$ of $\Gamma_I^a$ is equal to the number of $(0,1)$-half-edges in $\Gamma_I^a$ which is in turn greater or equal to the number of edges $E_I^a$ due to (\ref{decoration restrictions iii}). Hence the Euler characteristic of $\Gamma_I^a$ non-negative:
$V_\Gamma^a-E_\Gamma^a\geq 0 $.
Therefore $\Gamma_I^a$ is either a tree or a 1-loop graph. Next, property (\ref{decoration restrictions iv}) shows that $\Gamma_I^a$ has to be a wheel graph, with arbitrary number of leaves attached at vertices, or a corolla.
On the other hand, if $\Gamma_I$ contains a wheel then the corresponding source part vanishes:
\begin{multline}
W^\mr{source}_{\Gamma,c} = \int_{(S^1\times S^1)^{\times V}} (d\tau_1-d\tau_2)\delta(\tau_2-\tau_1)\wedge\cdots \wedge
(d\tau_n-d\tau_1)\delta(\tau_1-\tau_n)\wedge F  =\\
\int_{(S^1\times S^1)^{\times V}} \underbrace{(d\tau_1\wedge d\tau_2\wedge\cdots \wedge d\tau_n+ (-1)^n d\tau_2\wedge\cdots d\tau_n\wedge d\tau_1)}_{=0} \wedge\\
\wedge\delta(\tau_2-\tau_1)\cdots\delta(\tau_1-\tau_n)\wedge F
=0
\label{vanishing argument}
\end{multline}
where $n$ is the length of the wheel and $F\in \Omega^\bt((S^1\times S^1)^{\times V})$ is some differential form.

\begin{Rem}
Argument (\ref{vanishing argument}) has the fault that the integrand is singular and the result is $0\cdot \delta(0)$. This can be remedied by regularizing the propagator $\omega$, e.g., by changing $\delta(\tau-\tau')$ in (\ref{propagator torus axial}) to a smeared delta-function. Notice that the
source parts of all diagrams except corollas still vanish exactly: in this vanishing argument the chain homotopy equation is never used; we only use
the de~Rham bigrading properties, $PK=0$ and the fact that harmonic forms on a circle are closed under multiplication.
\end{Rem}


Thus we have shown that $W^\mr{source}_{\Gamma,c}$ vanishes unless $\Gamma_I$ is a collection of corollas (i.e. there are no $I$-edges).

Now fix a decoration $c$ satisfying (\ref{decoration restrictions i}--\ref{decoration restrictions v}) with $c(e)=II$ for all edges. Repeating the Euler characteristic argument as above (using properties (\ref{decoration restrictions i},\ref{decoration restrictions iii})), we show that $\Gamma$ has to be either a tree or a 1-loop graph and using property (\ref{decoration restrictions v}) we show that it has to be either a wheel or a corolla. If it is a wheel then
\begin{multline}
W^\mr{source}_{\Gamma,c} = \int_{(S^1\times S^1)^{\times V}} (d\sigma_1-d\sigma_2)\wedge\cdots\wedge (d\sigma_V-d\sigma_1) \wedge F=\\
=\int_{(S^1\times S^1)^{\times V}} \underbrace{(d\sigma_1\wedge\cdots\wedge d\sigma_V + (-1)^V d\sigma_2\wedge\cdots\wedge d\sigma_V \wedge d\sigma_1)}_{=0}\wedge F=\\
=0
\end{multline}
Therefore $W^\mr{source}_{\Gamma,c}$ vanishes for any decoration $c$ unless $\Gamma$ is a corolla. This concludes the proof of the Lemma.
$\Box$

An immediate consequence of the Lemma is that the effective action $S^\mr{eff}_x$ is just the restriction of the action $S_x$ to vacua: there are no quantum corrections.

\section{The partition function on the torus}\label{s:pf}
Let $\Seff$ be the global effective action on the moduli space of vacua for
the torus $\T^2$. In Lemma~\ref{l-vanishing}, we have shown that it has no quantum
corrections. The moduli space of vacua can be viewed as $\Map(\bbR^2[1],T^*[1]M)$ and in \cite{BMZ} it has been remarked that the action restricted to vacua
is the AKSZ action for this mapping space.
In local coordinates the
superfields are
\begin{equation}
\label{reduced_superfields}
 \superx^\mu = x^\mu + e^1 \eta^{+\mu}_1 + e^2 \eta^{+\mu}_2 - s b^{+\mu} \;,
  ~~~~ \supereta_\nu = b_\nu + e^1 \eta_{\nu 1} + e^2 \eta_{\nu 2} + s x^+_\nu\;,
\end{equation}
where $s=e^1 e^2$ is the generator of $H^2_{dR}(\T^2)$ normalized to $\int\limits_{\bbR^2[1]} ds\ s =1$. If $\pi$ is the Poisson bivector field on $M$,
then 
\begin{equation}
\label{solution_master_equation}
\Seff = \frac12\int_{\bbR^2[1]} ds \, \pi^{\mu\nu}(\superx) \supereta_\mu\supereta_\nu \;.
\end{equation}

There exists a canonical Berezinian given by the coordinate volume form
\begin{equation}
\label{torus_berezinian}
\nu = dx\cdots dx^+\cdots db \cdots db^+ \cdots d\eta_i\cdots d\eta^+_i\cdots \;.
\end{equation}
If we denote with $\Delta$ the corresponding Laplacian, the AKSZ action satisfies
$$\Delta \EE^{\frac\ii\hbar\Seff}=0$$
and defines a class in $\Delta$\ndash cohomology.

\subsection{K\"ahler gauge fixing and Euler class}\label{ss:kahler}
Now let $\pi$
be symplectic such that $\pi^{-1}$ is the K\"ahler form of the hermitian structure $(J,g)$. In the complex coordinates $\{x^i\}$ of $M$ we have
$\pi^{i\J} = i g^{i\J}$. Let us fix a complex structure on $\T^2$ defined by $z=\theta^1+\tau \theta^2$, for $\tau=\tau_1 + i \tau_2$ and $\tau_2>0$.
Let
$$
\eta^{+\mu}_z = (\eta_2^{+\mu}-\bar\tau \eta^{+\mu}_1)/2i \tau_2\;,~~~~~\eta_{z\mu} = (\eta_{2\mu}-\bar\tau \eta_{1\mu})/2i \tau_2\;.
$$
Let ${\calL}_{\varepsilon,\tau}$ be the
following Lagrangian submanifold  of $\Map(\bbR^2[1],T^*[1]M)$:
\begin{equation}
\label{Kahler_gauge_fixing}
\eta^{+i}_z = \eta^{+\I}_{\z} =\eta_{zi} = \eta_{\z \I} = x^+ = b^+ = 0  ~~.
\end{equation}
Let us define
$$
p_k= \eta_{\z k} + \Gamma^{j}_{ki} \eta^{+i}_{\z} b_j \;,
$$
where $\Gamma$ are the Christoffel symbols of the Levi-Civita connection. All fiber coordinates $b,p,\eta^+$ transform tensorially with respect
to a transformation of coordinates on $M$ so that
${\calL}_{\varepsilon,\tau}=(T^*[1]+T^*M +T[-1])M$. After a straightforward computation we get
\begin{eqnarray*}
\Seff&=&\tau_2 \left (R^j_{s\bar l k}g^{s\bar r} \eta^{+k}_{\z} \eta^{+\bar l}_z b_j b_{\bar r} +  g^{i\bar j} p_i p_{\bar j}\right )\cr
 & = & \frac12 \tau_2 \left( R^{\mu\lambda} b_\mu b_\lambda + g^{\mu\nu} p_\mu p_\nu\right)\;.
\end{eqnarray*}
The induced Berezinian on ${\calL}_{\varepsilon,\tau}$ reads
$$
\sqrt{\nu} = \frac{dx^\mu\cdots db_\mu \cdots dp_\mu\cdots  d\eta^{+\mu}\cdots}{(2\pi)^m},
$$
with $m=\dim M$.\footnote{We choose here the standard convention that the measure for a pair of even conjugate coordinates $p,q$ is $dp\,dq/(2\pi\ii\hbar)$,
whereas the measure for a pair of odd conjugate coordinates $b,\eta^+$ is $\ii\hbar\, db\,d\eta^+$. This is consistent with the standard normalization
\[
\int\EE^{\frac\ii\hbar pq} \frac{dp\,dq}{2\pi\ii\hbar} = \int \EE^{\frac\ii\hbar b\eta^+}\ii\hbar\, db\,d\eta^+=1.
\]}
If we perform the fiberwise integration with respect to the fibration ${\calL}_{\varepsilon,\tau}\rightarrow T[-1]M$, we get
\begin{multline*}
\frac1{(2\pi)^m}\int dp_\mu\cdots db_\mu\cdots  \EE^{\frac\ii\hbar\Seff} = \\
=\frac1{(2\pi)^m}\frac1{(\ii\hbar)^{\frac m2}}
\int db_\mu\cdots \frac{1}{\det(\tau_2 g^{\mu\nu})^{1/2}} \EE^{\frac\ii{2\hbar}\tau_2 R^{\mu\nu}b_\mu b_\nu} =\\
=\frac1{(2\pi)^\frac m2}\int db'_\mu\cdots \frac{1}{\det(g^{\mu\nu})^{1/2}}
\EE^{-\frac12R^{\mu\nu}b'_\mu b'_\nu} = \\
=\frac1{(2\pi)^\frac m2}\sqrt g\,\Pf(R)
\in C^\infty(T[-1]M)=\Omega M,
\end{multline*}
which, by the Chern--Gauss-Bonnet theorem, is a representative of the Euler class.
Notice that $\hbar$ and $\tau_2$ disappear in the final formula. (The main reason for this is that
scaling the $b$ and $p$ variables by the same factor preserves the Berezinian since the former are odd and the latter are even variables.)
Finally, we can integrate over $M$ getting
\[
Z = \chi(M),
\]
the Euler characteristic of $M$. (Actually, by the argument in the Introduction that the partition function should be the Euler characteristic, we might in reverse think of this result as one more physical proof of the Chern--Gauss--Bonnet theorem, in the case of K\"ahler manifolds.)

\begin{Rem}
In this Section we have assumed the existence of a K\"ahler structure on $M$. We expect the above results to
hold if we just use an almost K\"ahler structure, but computations become much more involved.
\end{Rem}
\begin{Rem}
Another possible gauge fixing consists in setting all $+$ variables to zero. The effective action
then reduces to $\pi^{\mu\nu}(x)\eta_{\mu1}\eta_{\nu2}$ and is independent of the $b$ variables.
If $M$ is compact, the integrals over the $\eta$ and $x$ variables is finite (and proportional to the
symplectic volume of $M$); because of the $b$\ndash integration, the partition function then vanishes.
If $M$ is not compact, the partition function is ambiguous and of the form $0\cdot\infty$. This gauge
fixing is then in general not equivalent to the K\"ahler one used above. From the considerations in the Introduction, the K\"ahler gauge fixing is the one compatible with the Hamiltonian interpretation of the theory.
\end{Rem}

\subsection{Regularized effective action}\label{ss-rea}
We now show that the symmetries of $\Seff$ induce a regularization which allows one to compute the partition function for every Poisson structure and to show that, independently
of the Poisson structure, one gets the Euler characteristic of the target.\footnote{We thank T. Johnson-Freyd for pointing out this approach.}

The main remark is that the effective action and the symplectic form are invariant under the action of the Lie algebra of divergenceless vector fields of $\bbR^2[1]$
on the moduli space of vacua. This Lie algebra is spanned by the vector fields
$\frac\de{\de e^1}$,  $\frac\de{\de e^2}$, $e^2\frac\de{\de e^1}$, $e^1\frac\de{\de e^2}$ and $e^1\frac\de{\de e^1}-e^2\frac\de{\de e^2}$.
The fifth vector field is generated by the previous ones and we are not going to need it in the following.
We lift the first four vector fields first to $\Map(\bbR^2[1],M)$ and next to its cotangent bundle shifted by one. We will denote the resulting vector fields by
$\delta_1$ , $\delta_2$, $K_1$ and $K_2$, respectively.
Since they have degree $-1$ or $0$
and the symplectic form has degree $-1$, they are also automatically Hamiltonian with uniquely defined Hamiltonian functions $\tau_1$ and $\tau_2$ (of degree $-2$), and $\rho_1$ and $\rho_2$ (of degree $-1$).
The Lie algebra relations translate into the Poisson bracket relations
\begin{equation}\label{eone}
(\tau_2,\rho_1)=\tau_1,\quad (\tau_1,\rho_2)=\tau_2,\quad
(\tau_1,\rho_1)=0,\quad(\tau_2,\rho_2)=0.
\end{equation}
Also notice that we have
\begin{equation}\label{etwo}
K_1\circ\delta_1=K_2\circ\delta_2=0,
\end{equation}
which implies that $\rho_i$ Poisson commutes with every $\delta_i$\ndash exact function.
Finally, since we started with divergenceless vector fields, we get
\begin{equation}\label{ethree}
\Delta\tau_1=\Delta\tau_2=\Delta\rho_1=\Delta\rho_2=0.
\end{equation}

\begin{Rem}
Even though we do not need the explicit form of these vector fields and their Hamiltonian functions, we give them for completeness of our presentation.
From the defining formulae
$\delta_i\superx= \frac{\de\superx}{\de e^i}$, $\delta_i\supereta = \frac{\de\supereta}{\de e^i}$
$K_1\superx=e^2\frac{\de\superx}{\de e^1}$, $K_1\supereta=e^2\frac{\de\supereta}{\de e^1}$,
$K_2\superx=e^1\frac{\de\superx}{\de e^2}$ and $K_2\supereta=e^1\frac{\de\supereta}{\de e^2}$,
we get
\begin{align*}
\delta_1 &= \eta_1^{+\mu}\frac\de{\de x^\mu} + b^{+\mu}\frac\de{\de\eta_2^{+\mu}} + \eta_{1\mu}\frac\de{\de b_\mu} - x^+_\mu\frac\de{\de \eta_{2\mu}},\\
\delta_2 &= \eta_2^{+\mu}\frac\de{\de x^\mu} - b^{+\mu}\frac\de{\de\eta_1^{+\mu}} + \eta_{2\mu}\frac\de{\de b_\mu} + x^+_\mu\frac\de{\de \eta_{1\mu}},\\
K_1 &= \eta_{1\mu}\frac\de{\de\eta_{2\mu}}+\eta_1^{+\mu}\frac\de{\de \eta_2^{+\mu}},\\
K_2 &=\eta_{2\mu}\frac\de{\de\eta_{1\mu}}+\eta_2^{+\mu}\frac\de{\de \eta_1^{+\mu}}.
\end{align*}
The corresponding Hamiltonian functions, with respect to the symplectic structure
\[
\omega=\int\dd e^2\dd e^1\; \delta\superx^\mu\delta\supereta_\mu=
\delta x^\mu\delta x^+_\mu +\delta\eta_1^{+\mu}\delta\eta_{2\mu} - \delta\eta_2^{+\mu}\delta\eta_{1\mu}-\delta b^{+\mu}\delta b_\mu,
\]
are given by
\begin{align*}
\tau_1&= x^+_\mu\eta_1^{+\mu}-\eta_{1\mu}b^{+\mu},\\
\tau_2&= x^+_\mu\eta_2^{+\mu}-\eta_{2\mu}b^{+\mu},\\
\rho_1 &= -\eta_{1\mu}\eta_1^{+\mu},\\
\rho_2 &= \eta_{2\mu}\eta_2^{+\mu}.
\end{align*}
\end{Rem}

We now turn back to the effective action. It turns out that it is not only $\delta_1$- and $\delta_2$\ndash closed, but actually exact:
\[
\Seff=\delta_2\delta_1\sigma,\quad \sigma:=\frac12\pi^{\mu\nu}(x)b_\mu b_\nu.
\]
{}From all the above it follows that $\Seff$ Poisson commutes not only with $\tau_1$ and $\tau_2$, but also with $\rho_1$ and $\rho_2$. Notice that the Jacobi identity
for $\pi$ implies $(\Seff,\sigma)=0$.

Now consider the regularized effective action
\[
\Seff^{\epsilon,t_1,t_2}:= \epsilon \Seff -\ii\hbar( t_1\tau_1 +t_2\tau_2),
\]
which satisfies the QME for all $\epsilon,t_1,t_2$. By all the above it follows that
\begin{align*}
\frac\de{\de t_1}\EE^{\frac\ii\hbar\Seff^{\epsilon,t_1,t_2}} &=\Delta\left( \frac1{t_2}\EE^{\frac\ii\hbar\Seff^{\epsilon,t_1,t_2}} \rho_1\right),\\
\frac\de{\de t_2}\EE^{\frac\ii\hbar\Seff^{\epsilon,t_1,t_2}} &=\Delta\left( \frac1{t_1}\EE^{\frac\ii\hbar\Seff^{\epsilon,t_1,t_2}} \rho_2\right),
\end{align*}
which means that, as long as the parameters $t_1$ and $t_2$ are different from zero, the regularized effective action is independent of them up to quantum canonical transformations. We also have
\[
\frac\de{\de\epsilon}\EE^{\frac\ii\hbar\Seff^{\epsilon,t_1,t_2}} = \Delta\left( \frac\ii{\hbar t_2}\EE^{\frac\ii\hbar\Seff^{\epsilon,t_1,t_2}} \delta_1\sigma\right)=
-\Delta\left( \frac\ii{\hbar t_1}\EE^{\frac\ii\hbar\Seff^{\epsilon,t_1,t_2}} \delta_2\sigma\right),
\]
which implies that, as long as one of the parameters $t_1$ and $t_2$ is different from zero, the regularized effective action is independent of $\epsilon$ up to quantum canonical transformations. This in particular means that  the partition function is independent of $\epsilon$ and that, in order to compute it, we may simply set $\epsilon$ to zero.

To perform the final computation we further deform the regularized effective action by adding one more irrelevant term.
Namely, let $G$ be a function on $\Map(\bbR^2[1],M)$. Then
\[
\Seff^{0,t_1,t_2,G}:= -\ii\hbar( t_1\tau_1 +t_2\tau_2+\delta_2\delta_1 G)
\]
satisfies the QME for all $t_1,t_2,G$. Moreover, if we take a path $G(t)$ of such functions, we get
\[
\frac\de{\de t}\EE^{\frac\ii\hbar\Seff^{0,t_1,t_2,G(t)}} = \Delta\left( \frac{\EE^{\frac\ii\hbar\Seff^{0,t_1,t_2,G(t)}} \delta_1\dot G(t)}{t_2}\right)=
-\Delta\left( \frac{\EE^{\frac\ii\hbar\Seff^{0,t_1,t_2,G(t)}} \delta_2\dot G(t)}{t_1}\right),
\]
which means that, as long as one of the two parameters $t_1$ and $t_2$ is different from zero,
adding the new term is irrelevant up to quantum canonical transformations.
We are now ready to compute the partition function. Namely, we choose
$\calL:=\Map(\bbR^2[1],M)$
as the Lagrangian submanifold of $\Map(\bbR^2[1],T^*[1]M)$ over which we integrate. Since $\tau_1|_\calL=\tau_2|_\calL=0$, we get
\[
Z = \int_\calL \EE^{\frac\ii\hbar\Seff^{0,t_1,t_2,G}}  = \int_{\Map(\bbR^2[1],M)} \EE^{\delta_2\delta_1 G}
\]
and we already know that the last integral is independent of $G$. We only have to make sure that $G$ is chosen is such a way that the integral is well defined (choosing $G=0$, e.g.,
would lead to $\infty\cdot 0$). A good choice is $G:=g_{\mu\nu}(x)\eta_1^{+\mu}\eta_2^{+\nu}$ where $g^{\mu\nu}$ is a Riemannian metric on target.
An explicit computation \cite{BE,JF} then shows that $Z=\chi(M)$.

\begin{Rem}
Switching $\epsilon$ to zero first and then turning on the regularizing term in $G$ is a bit formal since we pass through the solution to the QME where both terms are absent. This solution has a singular integral (of the type $0\cdot\infty$) on $\calL$. In order to find a non formal regularization it is necessary to have additional structure on the Poisson manifold. For instance let us look for $G$ such that $\Seff^{\epsilon,t_1,t_2,G}$ satisfies the QME for any $\epsilon$, preserving the property that the variation of $G$  produces a quantum canonical transformation. Indeed, let us assume $G$ as above but let $g$ be possibly degenerate. If $(\Seff,G)=0$ then $\Seff^{\epsilon,t_1,t_2,G}$ satisfies the
QME and the change of $G$ is a quantum canonical transformation. This property is equivalent to require that $\pi\circ g=0$ and $L_{V}g = 0$ for every vector field $V$ tangent to the symplectic leaves. In special cases we may find such a $g$ and in addition a K\"ahler structure on the leaves, compatible with the symplectic structure,
such that a gauge fixing given by a mixture of what we discussed in this subsection and the K\"ahler one is available.
We plan to investigate  the geometrical conditions needed for this gauge fixing in the future.

\end{Rem}

\subsection{Regularization on the space of fields}\label{ss-rsf}
The argument of the previous subsection may formally be lifted to the space of fields to show that the regularized action is actually independent, up to quantum canonical transformations, of the Poisson structure.
Let $s^1$ and $s^2$ denote the coordinates on the two $S^1$ factors of the torus $\T^2$,
and let $e^1$ and $e^2$ denote the corresponding fiber coordinates on $T[1]\T^2$. We now denote by $\delta_1$, $\delta_2$,
$K_1$ and $K_2$ the lifts of the vector fields
$\frac\de{\de e^1}$,  $\frac\de{\de e^2}$, $e^2\frac\de{\de e^1}$ and $e^1\frac\de{\de e^2}$
to  the space of fields
$\calF=\Map(T[1]\T^2,T^*[1]M)$.
We denote by $\tau_1$, $\tau_2$, $\rho_1$ and $\rho_2$ their Hamiltonian functions. They satisfy \eqref{eone} and \eqref{etwo}, and formally also \eqref{ethree}.

\begin{Rem}
For completeness, we give explicit expressions also in this case, even if we are not going to need them.
If we write
\begin{align*}
\sfX &= X + \eta_1^+ e^1 + \eta_2^+e^2 + \beta^+ e^1 e^2,\\
\sfeta &= \beta + \eta_1 e^1 + \eta_2 e^2 + X^+ e^1 e^2,
\end{align*}
we then have
\begin{align*}
\delta_1 X&=-\eta_1^+, &\delta_1\eta_2^+&=\beta^+,&\delta_1\beta&=\eta_1,&\delta_1\eta_2&=-X^+,\\
\delta_2 X&=-\eta_2^+,&\delta_2\eta_1^+&=-\beta^+,&\delta_2\beta&=\eta_2,&\delta_2\eta_1&=X^+,
\end{align*}
and
\begin{align*}
K_1\eta_2^+ &= \eta_1^+, & K_2\eta_1^+&=\eta_2^+,\\
K_1\eta_2&=\eta_1, & K_2\eta_1&=\eta_2.
\end{align*}
With respect to the symplectic form
\[
\Omega = \int_{\calF}
\delta\sfX\delta\sfeta =
\int_{\T^2} (\delta X^\mu\delta X^+_\mu - \delta\eta_1^{+\mu}\delta\eta_{2\mu} +\delta\eta_2^{+\mu}\delta\eta_{1\mu}+\delta\beta^{+\mu}\delta\beta_\mu)
\,\dd s^1\dd s^2,
\]
the corresponding Hamiltonian functions are
\begin{align*}
\tau_1&= \int_{\T^2} (-\eta_1^{+\mu}X^+_\mu+\beta^{+\mu}\eta_{1\mu})\,\dd s^1\dd s^2,\\
\tau_2&= \int_{\T^2} (-\eta_2^{+\mu}X^+_\mu+\beta^{+\mu}\eta_{2\mu})\,\dd s^1\dd s^2,\\
\rho_1 &= \int_{\T^2}\eta_1^{+\mu}\eta_{1\mu}\,\dd s^1\dd s^2,\\
\rho_2 &= -\int_{\T^2}\eta_2^{+\mu}\eta_{2\mu}\,\dd s^1\dd s^2.
\end{align*}
Notice that despite their non covariant look the above formulae are actually globally well defined.
\end{Rem}

The action $S=S_0+S_\pi$ is  $\delta_1$- and $\delta_2$\ndash closed; it turns out that the interaction part
$S_\pi$ is actually exact:
\[
S_\pi=\delta_2\delta_1\sigma_\pi,\quad \sigma_\pi:=\int_{\T^2}\frac12\pi^{\mu\nu}(X)\beta_\mu \beta_\nu\,\dd s^1\dd s^2.
\]
{}From all the above it follows that $S$ Poisson commutes not only with $\tau_1$ and $\tau_2$, but also with $\rho_1$ and $\rho_2$. Notice that the Jacobi identity
for $\pi$ implies $(S,\sigma_\pi)=0$.

Now consider the regularized  action
\[
S^{\epsilon,t_1,t_2}:= S_0+\epsilon S_\pi -\ii\hbar( t_1\tau_1 +t_2\tau_2),
\]
which satisfies the CME and formally also the QME for all $\epsilon,t_1,t_2$. By all the above it follows that, formally,
\begin{align*}
\frac\de{\de t_1}\EE^{\frac\ii\hbar S^{\epsilon,t_1,t_2}} &=\Delta\left( \frac1{t_2}\EE^{\frac\ii\hbar S^{\epsilon,t_1,t_2}} \rho_1\right),\\
\frac\de{\de t_2}\EE^{\frac\ii\hbar S^{\epsilon,t_1,t_2}} &=\Delta\left( \frac1{t_1}\EE^{\frac\ii\hbar S^{\epsilon,t_1,t_2}} \rho_2\right),
\end{align*}
which means that, as long as the parameters $t_1$ and $t_2$ are different from zero, the regularized action is independent of them up to quantum canonical transformations. We also have, again formally,
\[
\frac\de{\de\epsilon}\EE^{\frac\ii\hbar S^{\epsilon,t_1,t_2}} = \Delta\left( \frac\ii{\hbar t_2}\EE^{\frac\ii\hbar S^{\epsilon,t_1,t_2}} \delta_1\sigma\right)=
-\Delta\left( \frac\ii{\hbar t_1}\EE^{\frac\ii\hbar S^{\epsilon,t_1,t_2}} \delta_2\sigma\right),
\]
which implies that, as long as one of the parameters $t_1$ and $t_2$ is different from zero, the regularized action is independent of $\epsilon$ up to quantum canonical transformations.  This in particular means that  the partition function is independent of $\epsilon$ and that, in order to compute it,
we may simply set $\epsilon$ to zero. It is now easy to see that, for a reasonable choice of propagators, the effective action for $S^{0,t_1,t_2}$
is simply the restriction to vacua, that is the the regularized effective action $\Seff^{0,t_1,t_2}$ considered in the previous subsection.

\section{Globalization of the effective action}\label{s:global}
We now go back to the problem of globalizing $\tSeff^{(0)}$ in the general case.
Recall that $\tSeff\in\Gamma(\Lambda T^*M\otimes\Hat S\calH^*[[\hbar]])$ satisfies the modified QME \eqref{e:mQME}. We write
$\tSeff=\sum_{i=0}^m \tSeff^{(i)}$, where $\tSeff^{(i)}$ is the $i$\ndash form component and $m=\dim M$. In form degree zero, we have
\[
\frac12(\Tilde S_\text{eff}^{(0)},\Tilde S_\text{eff}^{(0)})-\ii\hbar\Delta\Tilde S_\text{eff}^{(0)}=0,
\]
which is the usual QME.

The modified QME is preserved under modified quantum canonical transformations. Namely, $T\in\Gamma(\Lambda T^*M\otimes\Hat S\calH^*[[\hbar]])$ of total degree $-1$ generates the infinitesimal transformation
\[
\delta\tSeff = \dd_x T +(\tSeff,T)-\ii\hbar\Delta T
\]
which preserves the modified QME at the infinitesimal level. Notice that, setting $T=\sum_{i=0}^m T^{(i)}$, we get in form degree zero
\[
\delta\tSeff^{(0)} = (\tSeff^{(0)},T^{(0)})-\ii\hbar\Delta T^{(0)},
\]
which is a usual infinitesimal quantum canonical transformation. The goal of this Section is to prove the following
\begin{Thm}\label{t:glob}
There is a quantum canonical transformation starting at order $1$ in $\hbar$
that makes the form degree zero part $\tSeff^{(0)}$ of the effective action closed with respect to the induced Grothendieck differential
$D=\dd_x + (S_R|_\text{vacua},\ )$
on $\Gamma(\Lambda T^*M\otimes\Hat S\calH^*[[\hbar]])$,
where $S_R|_\text{vacua}$ denotes the evaluation of $S_R$ on vacua.
\end{Thm}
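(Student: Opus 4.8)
The plan is to run an obstruction-theoretic induction in powers of $\hbar$, exploiting the nilpotency of the full BV--de~Rham operator together with the vanishing of the higher Grothendieck cohomology established in Section~\ref{s-formal}. First I would record two structural facts. The operator $D=\dd_x+(S_R|_\text{vacua},\ )$ squares to zero: from $\dd_x S_R=S_{\dd_x R}$ and $(S_R,S_{R'})=S_{[R,R']}$ one gets $\dd_x S_R+\tfrac12(S_R,S_R)=S_{\dd_x R+\frac12[R,R]}=0$ by the Maurer--Cartan equation \eqref{MCR}, and restricting to vacua the Jacobi identity yields $D^2=0$. Second, introduce the full operator $\sfQ:=\dd_x+(\tSeff,\ )-\ii\hbar\Delta$; the modified QME \eqref{e:mQME} is precisely $\sfQ^2=0$. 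Decomposing $\sfQ=\sfQ_0+\sfQ_1+\cdots$ by the form degree it raises, with $\sfQ_0=(\tSeff^{(0)},\ )-\ii\hbar\Delta$ form-degree preserving and $\sfQ_1=\dd_x+(\tSeff^{(1)},\ )$, and writing $\tSeff^{(1)}=S_R|_\text{vacua}+\tSeff^{(1)}_{\ge1}$ where $\tSeff^{(1)}_{\ge1}$ collects the terms of $\hbar$-order $\ge1$, the form-degree-$1$ component of \eqref{e:mQME} rearranges into the clean identity
\[
D\tSeff^{(0)}=-\sfQ_0\,\tSeff^{(1)}_{\ge1}.
\]
In particular $D\tSeff^{(0)}=O(\hbar)$, so $\tSeff^{(0)}$ is already $D$-closed at tree level; this is exactly why the transformation can be taken to start at order $\hbar$. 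A generator $T=O(\hbar)$ leaves $\tSeff^{(0)}_{\mathrm{cl}}$ and $S_R|_\text{vacua}=\tSeff^{(1)}_{\mathrm{cl}}$, hence $D$ itself, unchanged.

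The induction runs as follows. Suppose that after finitely many canonical transformations one has $D\tSeff^{(0)}=\hbar^nA_n+O(\hbar^{n+1})$ for some $n\ge1$. Applying $D$ and using $D^2=0$ gives $DA_n=0$, so the leading obstruction is a $D$-closed $1$-form valued in $\Hat S\calH^*$. Here I invoke Section~\ref{s-formal}: the bundle $\Hat S\calH^*$ is assembled from $\Hat S(TM\oplus T^*M)$-type bundles with constant $H^\bullet(\Sigma)$ coefficients, and since $D$ acts as the natural $R$-action noted after \eqref{e:mQME}, its cohomology is concentrated in form degree zero. Thus $H^{\ge1}_D=0$ and $A_n=DC_n$ for some form-degree-zero $C_n$. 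It remains to realize the cancellation of $\hbar^nA_n$ by an honest canonical transformation: only the form-degree-zero part $T^{(0)}$ of a generator affects $\tSeff^{(0)}$, via $\delta\tSeff^{(0)}=\sfQ_0T^{(0)}$, so the induced change in the obstruction is $\delta(D\tSeff^{(0)})=D\,\sfQ_0T^{(0)}$.

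The main obstacle is exactly this compatibility. The allowed variations of $\tSeff^{(0)}$ are forced into $\Ima\sfQ_0$, which at leading order is $\Ima(\tSeff^{(0)}_{\mathrm{cl}},\ )$, whereas the cohomological primitive $C_n$ is a priori unconstrained; moreover $A_n$ carries a genuinely quantum contribution $\propto\Delta\tSeff^{(1)}_{\ge1}$ that can only be absorbed through the $-\ii\hbar\Delta T^{(0)}$ term of the variation, i.e.\ one power of $\hbar$ beyond the classical bracket. I would handle this using the graded commutator, which the same Jacobi/derivation computation gives as $D\sfQ_0=-\sfQ_0 D+(D\tSeff^{(0)},\ )$, whose inhomogeneous term is $O(\hbar^n)$ and hence negligible at leading order. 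Then $\delta(D\tSeff^{(0)})\equiv-\sfQ_0(DT^{(0)})$, and in view of $D\tSeff^{(0)}=-\sfQ_0\tSeff^{(1)}_{\ge1}$ the cancellation reduces to choosing $T^{(0)}$ with $DT^{(0)}\equiv\tSeff^{(1)}_{\ge1}$ modulo $\sfQ_0$-closed terms.

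This is an equation governed by the two anticommuting differentials $D$ (form-degree raising, with $H^{\ge1}_D=0$) and $\sfQ_0$ (form-degree preserving, the QME differential for $\tSeff^{(0)}$). Feeding the higher form-degree components of \eqref{e:mQME} into the same manipulation shows $D\tSeff^{(1)}_{\ge1}=-\sfQ_0\tSeff^{(2)}-\tfrac12(\tSeff^{(1)}_{\ge1},\tSeff^{(1)}_{\ge1})$, so the defect obstructing solvability is itself $\sfQ_0$-exact in one higher form degree; this is the spectral sequence of the form-degree filtration on the nilpotent $\sfQ$. It converges because the form degree is bounded by $\dim M$, and $H^{\ge1}_D=0$ forces it to collapse, so that the class of $\tSeff^{(0)}$ is represented by a $D$-closed form-degree-zero element. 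Each stage of the descent is implemented by a quantum canonical transformation of the appropriate order; since the orders are strictly increasing in $\hbar$ the composition converges $\hbar$-adically and produces the desired transformation. The careful bookkeeping of the quantum $\Delta$-corrections against the classical $D$-primitives at each order is the step requiring genuine attention.
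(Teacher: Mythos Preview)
Your overall strategy—an $\hbar$-induction exploiting $H^{\ge1}_D=0$—is the same as the paper's, but your execution differs in a way that leaves a real gap. You restrict yourself to generators $T^{(0)}$ of form degree zero, since only these act on $\tSeff^{(0)}$. This forces you to show that the leading obstruction $A_n$ lies in $\Ima(D\sfQ_0)$, not merely in $\Ima D$; Grothendieck acyclicity gives you $A_n=DC_n$, but you then need $C_n$ to be $\Omega$-exact modulo $\ker D$ (equivalently, that the \emph{modified} $S^{(1)'}_n$ is $D$-exact, not the original one). Your spectral-sequence sketch is meant to close this, but the ``spectral sequence of the form-degree filtration on $\sfQ$'' is not well-posed here: $\sfQ$ depends on $\tSeff$, the equation is Maurer--Cartan rather than linear, and $D,\sfQ_0$ only anticommute up to $(D\tSeff^{(0)},\,\cdot\,)$. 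You acknowledge this last point but do not actually carry out the descent.

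The paper avoids this difficulty by solving a \emph{stronger} problem with generators in \emph{all} form degrees. First, a Feynman-diagram degree count (Lemma~\ref{l:tree}) shows that at tree level $S^{(i)}_0=0$ for every $i\ge2$ and $S^{(1)}_0=S_R|_\text{vacua}$; you tacitly use the latter, but the former is equally essential to launch the induction. Then, at each $\hbar^k$, the paper kills the entire quantum tail $S^{(i)'}_k$ for $i\ge1$ by descending in form degree from the top: $DS^{(m)'}_k=0$ for dimensional reasons, so $S^{(m)'}_k=D\tau$ and $T^{(m-1)}=-\hbar^k\tau$ removes it while shifting $S^{(m-1)'}_k$ by $-\Omega\tau$; the new $S^{(m-1)'}_k$ is now $D$-closed and one iterates. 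Only at the last step does a $T^{(0)}$ appear, and by then the equation it must solve is simply $D\tau=S^{(1)'}_k$ with the right-hand side already $D$-closed. This top-down organization turns each step into a direct application of $H^{>0}_D=0$, with no bicomplex or spectral-sequence machinery needed. What your approach would buy, if completed, is a transformation supported purely in form degree zero; the price is precisely the compatibility problem you flag at the end.
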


This will ensure that the so obtained effective action, call it $\Check S_\text{eff}^{(0)}$, is the image under $T\phi^*$ of a global effective action $\Seff$. Since
$\Check S_\text{eff}^{(0)}\in\Gamma(\Hat S\calH^*[[\hbar]])$, it follows from the discussion just before subsection~\ref{ss:gauge} that
$\Seff$ is a section of
$\Hat S\Tilde\calH^*[[\hbar]]$,\footnote{Recall that  $\Tilde\calH_x=H^{>0}(\Sigma)\otimes T_xM\oplus H^\bullet(\Sigma)\otimes T^*_xM[1]$.}
 i.e., a formal power series in $\hbar$ of functions on $\Tilde\calH$ (formal in the fiber coordinates).
Again, we may identify $\Tilde\calH$ with the canonical global moduli space of vacua by using a connection (e.g., the one in $\phi$). By Remark~\ref{r:changephi}, we conclude that the class of $\Seff$
under quantum canonical transformations is a well-defined object independent of all choices.

\subsection{Proof of Theorem~\ref{t:glob}}
We start with a simple observation:
\begin{Lem}\label{l:tree}
Write $\tSeff^{(i)}=\sum_{k=0}^\infty \hbar^k S^{(i)}_k$.
If the propagator satisfies the properties in \eqref{K properties}, then
$S^{(i)}_0=0$ $\forall i>1$, whereas $S^{(0)}_0$  and $S^{(1)}_0$ are
obtained by the evaluation on vacua of $\Hat S$ and $S_R$, respectively.
\end{Lem}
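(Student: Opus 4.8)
The plan is to read both assertions directly off the Feynman expansion \eqref{S_eff pert expansion}, in which the power of $\hbar$ is the loop number $l(\Gamma)$; hence the tree-level part $S^{(i)}_0$ is the contribution of connected \emph{trees} to the $i$-form component, while the $M$-form degree equals the number $V_R$ of vertices carrying the one-form vertex $S_R$ (each supplies one $\dd x$ on $M$). First I would isolate the single-vertex graphs (corollas). A $T\phi^*\pi$-corolla has $V_R=0$, lives in form degree $0$, and, all its legs being external, reproduces $S_{T\phi^*\pi}|_\text{vacua}=\Hat S|_\text{vacua}$ (note $T\phi^*S_0|_\text{vacua}=0$, since the vacua are closed); an $S_R$-corolla has $V_R=1$, lives in form degree $1$, and reproduces $S_R|_\text{vacua}$. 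As a corolla has a single vertex, none can have $V_R\ge 2$; so the entire content of the statement is the vanishing of the source integral $W^\mr{source}_\Gamma$ for every tree with at least two vertices.

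That vanishing I would argue purely on the source side, where the target decorations play no role. A tree with $V$ vertices has $V-1$ internal edges; since each propagator $\omega$ is a $1$-form and each external leaf carries a class of degree $\le 2$, matching the integrand to $\dim\Sigma^{\times V}=2V$ forces, \emph{at each vertex}, the incident propagator-legs together with the incident leaf-classes to carry $\Sigma$-form degree exactly $2$. Using the product bigrading of $\Omega^\bullet(\Sigma\times\Sigma)$ I would decompose $\omega=\omega^{(1,0)}+\omega^{(0,1)}$ and work term by term, orienting each internal edge toward the endpoint at which $\omega$ has form degree $1$. Writing $a_v$ for the in-degree of $v$ (the number of degree-$1$ propagator-legs there) and $D_v$ for the total degree of the classes on its external legs, the per-vertex balance is $a_v+D_v=2$, so $a_v\in\{0,1,2\}$, with $\sum_v a_v=V-1$.

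The finishing dichotomy is then the following. Either some \emph{leaf} $v$ has $a_v=1$: then $D_v=1$, its external classes reduce to a single degree-$1$ representative (the degree-$0$ classes being constants on the connected $\Sigma$), and since $\Sigma_v$ enters only through the one incident propagator, Fubini plus $\int_{\Sigma_v}\omega\,\pi_v^*\chi_\alpha=0$ (i.e.\ $\PP K=K\PP=0$) kills the whole integral. Or every leaf has $a_v=0$: the oriented tree is acyclic and therefore has a sink $v$ (out-degree $0$); this sink is not a leaf, so it is interior with in-degree equal to its tree-valence, and $a_v\le 2$ forces the valence to be exactly $2$ with $D_v=0$. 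Then $\Sigma_v$ enters only through its two degree-$1$ propagator-legs, and $\int_{\Sigma_v}\pi^*\omega\,\pi^*\omega=0$ (i.e.\ $K^2=0$) kills the integral. In either case $W^\mr{source}_\Gamma=0$; summing over the finitely many bigrading assignments gives $S^{(i)}_0=0$ for $i>1$ and leaves only the two corollas.

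The hard part will be the second alternative: one must be certain that, once a bidegree assignment is fixed, the sink really carries nothing but its two degree-$1$ legs, so that property $K^2=0$ applies cleanly (external constant leaves and degree-$0$ propagator-legs elsewhere contribute only spectator factors). The $\mathrm{DAG}$/sink step is exactly what rules out the dangerous higher-valence interior vertices, where neither $\PP K=0$ nor $K^2=0$ would apply in isolation. The remaining work is the bookkeeping that the $M$-form degree is $V_R$ and that form degrees $0$ and $1$ are exhausted by the $\pi$- and $R$-corollas respectively, which, with $T\phi^*S_0|_\text{vacua}=0$, yields the stated identifications of $S^{(0)}_0$ and $S^{(1)}_0$.
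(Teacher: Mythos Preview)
Your proof is correct and takes a genuinely different route from the paper's. The paper's argument is a global degree count: it observes that $W_\Gamma^\mr{source}$ vanishes unless every vertex of internal valence~$1$ carries at least two leaves of positive de~Rham degree (else $K\PP=0$ applies) and every vertex of internal valence~$2$ carries at least one (else $K^2=0$ applies); summing the resulting lower bound $E+2V_1+V_2$ for the total form degree, comparing to $2V$, and combining with the tree relation $V-E=1$ produces the impossible inequality $\sum_{k\ge3}(k-3)V_k\le -1$. You instead split each propagator into its $(1,0)$ and $(0,1)$ bidegree components on $\Sigma\times\Sigma$, which turns every term of the expansion into an \emph{oriented} tree, and then use the elementary fact that a finite DAG has a sink to pinpoint a single vertex at which $K\PP=0$ or $K^2=0$ applies directly. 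Your approach is more constructive---it names the vertex responsible for the vanishing in each term---and avoids the inequality manipulation, at the price of the bidegree/orientation bookkeeping. Both arguments consume exactly the same properties of the propagator; note only that your $K^2=0$ step at the sink tacitly uses $K^T=K$ to bring $\int_{\Sigma_v}\omega(u_1,v)\,\omega(u_2,v)$ into the form of property~(4) in \eqref{K properties}.
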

\begin{proof}
The terms for $k=0$ correspond to trees in the expansion in Feynman diagrams; so,
using the notations of Section~\ref{s:caxial}, what we have to prove is that
the source part $W_\Gamma^\mr{source}$ vanishes for any tree $\Gamma$ containing more than one vertex.

This is checked by the following degree counting argument. Consider a tree $\Gamma$ containing more than one vertex. Let $V_k$ be the number of vertices in $\Gamma$ of internal valence (i.e., not counting the leaves) equal to $k\geq 1$. Then the total number of vertices is
$$V=\sum_{k\geq 1}V_k,$$
the number of internal edges is
\be E=\frac{1}{2}\sum_{k\geq 1}k V_k \label{E via V_k},\ee
and the Euler characteristic of $\Gamma$ is
\be 1=V-E=\sum_{k\geq 1}\frac{2-k}{2}V_k \label{1=V-E}.\ee

Next, the source part $W_\Gamma^\mr{source}$ vanishes automatically due to $K\PP=\PP K=0$ and $K^2=0$, unless the following two properties hold for the decoration of leaves by cohomology classes $\chi_\alpha\in H^\bt(\Sigma)$:
\begin{enumerate}[(i)]
\item At every vertex of internal valence 1 there are at least two incident leaves decorated by cohomology classes of non-zero degree. (Otherwise $W_\Gamma^\mr{source}$ vanishes due to $K\PP=0$.) \label{dc condition at 1-val vertices}
\item At every vertex of internal valence 2 there is at least one incident leaf decorated by a cohomology class of non-zero degree. (Otherwise $W_\Gamma^\mr{source}$ vanishes due to $K^2=0$.) \label{dc condition at 2-val vertices}
\end{enumerate}
This gives a lower bound $E+2 V_1+V_2$ for the form degree of the integrand in (4.5); since it should coincide with the dimension of the space $\Sigma^{\times V}$ it is integrated against, we have the inequality
\be E + 2 V_1 + V_2 \leq 2 V.\ee
By (\ref{E via V_k}) this is equivalent to
$$\frac{1}{2}\,V_1+\sum_{k\geq 3}\frac{k-4}{2}\,V_k \leq 0.$$
Subtracting (\ref{1=V-E}) from this inequality, we get
$$\sum_{k\geq 3}(k-3)V_k\leq -1$$
which is a contradiction. Thus it is impossible to find a decoration of leaves of $\Gamma$ satisfying properties (\ref{dc condition at 1-val vertices}) and (\ref{dc condition at 2-val vertices}) simultaneously. Therefore, $W_\Gamma^\mr{source}$ vanishes for any decoration.
\end{proof}
We now set $S^{(1)'}=\tSeff^{(1)}-S^{(1)}_0$ and $S^{(i)'}=\tSeff^{(i)}$ for $i>1$.
\begin{Lem}
There is a modified quantum canonical transformation starting at order $1$ in $\hbar$
after which all $S^{(i)'}$ for $i\ge1$ vanish.
\end{Lem}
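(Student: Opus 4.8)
The plan is to eliminate the unwanted components order by order in $\hbar$ by a composition of modified quantum canonical transformations whose generator starts at order $\hbar^1$, using that a Grothendieck‑type differential is acyclic in positive form degree. Set $T_0:=\tSeff|_{\hbar^0}=S^{(0)}_0+S^{(1)}_0$ (by Lemma~\ref{l:tree}, $S^{(i)}_0=0$ for $i>1$) and introduce, on $\Gamma(\Lambda T^*M\otimes\Hat S\calH^*)$, the operator
\[
\tilde D:=\dd_x+(T_0,\,\cdot\,)=\dd_x+(S^{(1)}_0,\,\cdot\,)+(S^{(0)}_0,\,\cdot\,).
\]
The $\hbar^0$ part of the modified QME reads $\dd_x T_0+\frac12(T_0,T_0)=0$, so by the Leibniz and Jacobi identities $\tilde D^2=(\dd_x T_0+\frac12(T_0,T_0),\,\cdot\,)=0$; thus $\tilde D$ is a differential. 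Its form‑degree‑raising part is precisely the Grothendieck differential $D=\dd_x+(S^{(1)}_0,\,\cdot\,)$ of Theorem~\ref{t:glob} (recall that $(S^{(1)}_0,\,\cdot\,)$ acts as the natural $R$\ndash action on sections of $\Hat S\calH^*$), whereas $(S^{(0)}_0,\,\cdot\,)$ preserves the form degree.

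The first key point is that $\tilde D$ is acyclic in positive form degree. As in Section~\ref{s-formal}, write $\tilde D=\delta+\tilde D'$ with $\delta=-\dd x^i\,\frac\de{\de y^i}$ and $\tilde D'=D'+(S^{(0)}_0,\,\cdot\,)$. The extra term $(S^{(0)}_0,\,\cdot\,)$ depends on the fibre variables only through $T\phi^*\pi$, a power series in $y$ of non‑negative degree, so it neither lowers the $y$\ndash degree nor changes the form degree; hence $\tilde D'\circ\delta^*$ raises the $y$\ndash degree, where $\delta^*=y^i\,\iota_{\frac\de{\de x^i}}$, exactly as for $D$ itself. The cohomological perturbation argument of Section~\ref{s-formal} therefore applies verbatim and shows that $H^\bullet_{\tilde D}$ is concentrated in form degree zero. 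In particular there is a contracting homotopy $h$ of total degree $-1$, strictly lowering the form degree, with $\tilde D h+h\tilde D=\id-\Pi$, where $\Pi$ projects onto $H^0_{\tilde D}$.

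We now argue by induction on $N\ge1$. Assume that, after transformations, the form‑degree $\ge1$ part of $\tSeff$ has been removed at all orders $\hbar^k$ with $1\le k<N$; writing $\tSeff=\sum_k\hbar^k T_k$, this means $T_k=S^{(0)}_k$ for $1\le k<N$ and $T_N=S^{(0)}_N+\Theta_N$ with $\Theta_N$ of form degree $\ge1$. Extracting the coefficient of $\hbar^N$ in the modified QME gives $\dd_x T_N+\frac12\sum_{a+b=N}(T_a,T_b)-\ii\Delta T_{N-1}=0$. By the inductive hypothesis every bracket except those involving $T_0$ or $T_N$ is of form degree zero, and $\Delta T_{N-1}$ has no positive‑form‑degree part (for $N\ge2$ since $T_{N-1}$ is a zero‑form; for $N=1$ the only candidate is $\Delta S^{(1)}_0=\Delta(S_R|_\text{vacua})$, which vanishes by our standing choice of $\phi$ with $T\phi^*v$ constant, so that $\Div_{T\phi^*v}R=0$ and $\Delta S_R=0$). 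Hence the form‑degree $\ge1$ component reduces to $[\tilde D T_N]^{\ge1}=0$. Taking as generator $\hbar^N U_N$ with $U_N:=-h\,T_N$, the order‑$\hbar^N$ change of $\tSeff$ is the infinitesimal one, $\tilde D U_N=-T_N+\Pi T_N+h\,\tilde D T_N$; since $\tilde D T_N$ is a zero‑form and $h$ strictly lowers the form degree, $h\,\tilde D T_N=0$, so the new coefficient $T_N+\tilde D U_N=\Pi T_N$ lies in form degree zero. Thus $\Theta_N$ is removed without disturbing lower orders. Composing these transformations over all $N\ge1$ (the composition converges $\hbar$\ndash adically and may be assembled into a single generator $T=\sum_{N\ge1}\hbar^N U_N$, corrected recursively for the exponentiation) makes every $S^{(i)'}$, $i\ge1$, vanish.

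The main obstacle is the acyclicity statement of the second paragraph: one must ensure that adjoining the form‑degree‑preserving operator $(S^{(0)}_0,\,\cdot\,)$ to the acyclic $D$ does not create positive‑degree cohomology. This is exactly where the structure of $S^{(0)}_0$ is used — its fibre dependence enters only through the non‑negatively graded series $T\phi^*\pi$ — guaranteeing that the homological perturbation series converges and that $h$ may be chosen to lower the form degree. The remaining ingredients (the square‑zero property of $\tilde D$, the closedness $[\tilde D T_N]^{\ge1}=0$, and the vanishing of the stray $\Delta$\ndash term at $N=1$) are routine bookkeeping with the modified QME.
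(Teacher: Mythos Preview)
Your strategy of packaging $D$ and $\Omega:=(S^{(0)}_0,\,\cdot\,)$ into a single differential $\tilde D$ and building a contracting homotopy by perturbing the one from Section~\ref{s-formal} is reasonable, but the justification of the key acyclicity step is wrong. You assert that $(S^{(0)}_0,\,\cdot\,)$ does not lower the $y$\ndash degree because $T\phi^*\pi$ is a power series in $y$ of nonnegative degree. This does not follow: the BV bracket on $\calH_x$ contains the term $\frac{\partial S^{(0)}_0}{\partial x^+_j}\,\frac{\partial}{\partial y^j}$ (since the $H^2$\ndash component $x^+$ of the $\sfB$\ndash zero\-modes is conjugate to $y\in H^0\otimes T_xM$), and $S^{(0)}_0$ contains the summand $T\phi^*\pi^{ij}(y)\,b_i x^+_j$ whose leading part $\pi^{ij}(x)\,b_i x^+_j$ is constant in $y$. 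Hence $\Omega$ contains the operator $\pi^{ij}(x)\,b_i\,\partial/\partial y^j$, which \emph{does} lower the $y$\ndash degree by one, and the cohomological perturbation argument of Section~\ref{s-formal} does not apply verbatim to $\tilde D$.

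The gap is repairable---for instance, note that $\Omega$ preserves form degree while any homotopy $h_D$ for $D$ strictly lowers it, so $\Omega h_D$ is nilpotent of order $\le\dim M$ and the perturbed homotopy $h=h_D\sum_{n\ge0}(-\Omega h_D)^n$ is a finite sum---but that is not the argument you gave. The paper sidesteps the issue entirely: it never forms $\tilde D$, but works directly with $D$, whose acyclicity in positive form degree is already established. At each order $\hbar^k$ the modified QME yields $DS^{(i)'}_k+\Omega S^{(i+1)'}_k=0$, and one kills the $S^{(i)'}_k$ by descending induction on the form degree $i$, using only that $D$\ndash closed positive\ndash degree forms are $D$\ndash exact. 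This is shorter and avoids any convergence question for a perturbed homotopy.
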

\begin{proof}
We work by induction on the order of $\hbar$. At order zero the statement holds by Lemma~\ref{l:tree}.
Assume that $S^{(i)'}_r=0$ $\forall i\ge 1$ and $\forall r<k$.
Then the modified quantum master equation yields the identities
\[
DS^{(i)'}_k+\Omega S^{(i+1)'}_k=0,\ \forall i\ge1,
\]
with $D:=\dd_x+(S^{(1)}_0,\ )$ and $\Omega:=(S^{(0)}_0, \ )$. We already know that $D^2=\Omega^2=0$ and that $D$ and $\Omega$ commute.
If $T=\sum_{r=k}^\infty \hbar^rT_r$ is a generator starting at the order $k$, we then have the infinitesimal transformations
\[
\delta S^{(i)'}_k = D T^{(i-1)}_k + \Omega T^{(i)}_k,\ \forall i\ge 1.
\]
By dimensional reasons $DS^{(m)'}=0$, with $m=\dim M$, and since the $D$\ndash cohomology is concentrated in degree zero, we can find a
$\tau\in \Gamma(\Lambda T^*M\otimes\Hat S\calH^*)$ such that $S^{(m)'}=D\tau$. We now consider the transformation with generator $T=-\hbar^k\tau$.
Hence we get $\delta S^{(m)'}_k=-D\tau$, $\delta S^{(m-1)'}_k=-\Omega\tau$ and $\delta S^{(i)'}_k=0$ for $i<m-1$.
Integrating this transformation up to time $1$, we make $S^{(m)'}_k$ vanish; as a result the new $S^{(m-1)'}_k$ will be $D$\ndash closed. We may then proceed like this until we make all
the $S^{(i)'}$ vanish. This proves our claim.

Notice that these transformations may change the $S^{(i)'}_r$ for $r>k$. Moreover, the
generator used to kill $S^{(1)'}_k$ will act on $S^{(0)}_r$ for $r\ge k$ by a quantum canonical transformation.
\end{proof}
This completes the proof of Theorem~\ref{t:glob}.

As a final remark, observe that in the case when $\pi$ is regular and unimodular we start with $S^{(1)'}=0$, so we have two different but equivalent ways of getting the global action. One consists in taking the original
$\tSeff^{(0)}$, the other in applying the method described in this Section since nothing guarantees that the remaining $S^{(i)'}$ vanish at the start. After applying the method we get another effective action $\Check S_\text{eff}^{(0)}$
that simply differs from $\tSeff^{(0)}$  by a quantum canonical transformation and is also the image of $T\phi^*$ of a global action which we denote by
$\Check S$. Eventually, the two global effective actions $S$ and $\Check S$ simply differ by a quantum canonical transformation starting at order $\hbar$.

\section{Conclusions and perspectives}
In this paper
we have studied the effective action of the Poisson sigma model
on a closed surface $\Sigma$, where the Poisson structure $\pi$ on the target $M$ is treated perturbatively and,
for consistency, has to be assumed to be unimodular unless $\Sigma$ is a torus. We have shown how to obtain a
global effective action $\Seff$ as an $\hbar$\ndash dependent function on the moduli space of vacua
of the theory with zero Poisson structure, around which we are perturbing.
Because of the freedom in the choice of gauge fixing and the details of
globalization, $\Seff$ is, as usual, only well-defined up to quantum canonical transformations. By a reasonable
choice of the class of allowed gauge fixings---namely, those for which the propagator enjoys
properties \eqref{K properties}---we make sure that the order zero $S_\text{eff,0}$ of the effective action
is fixed and equal to the evaluation on vacua of the Poisson\ndash dependent part $S_\pi$ of the action; moreover,
the remaining quantum canonical transformations will start at order $1$.

In the cases when $\Sigma$ is a torus or $\pi$ is regular and unimodular, we have shown that 
$\Seff$ has (a representative with)
no quantum corrections. In the particular case when $\Sigma$ is a torus, $\pi$ is nondegenerate and
there is a compatible complex structure, we can use the latter to gauge-fix the remaining integration over vacua:
the final result is that, as expected from the Hamiltonian formulation and from comparison with with the A-model,
the partition function is the Euler characteristic of the target. An alternative approach that produces the same result consists in regularizing the effective
action by adding the Hamiltonian functions of supersymmetry generators.
In general, the effective action modulo quantum canonical transformations is an invariant of the Poisson
structure.

Recall that each order in $\hbar$ of $\Seff$ is actually a section of a vector bundle
$\calZ_g:=\Hat S\Tilde\calH^*$ over the target $M$ whose structure is fixed by the genus $g$ of the source
$\Sigma$.
These sections are just tensors of a particular sort.
The lowest order in the quantum master equation for $\Seff$ implies that  $S_\text{eff,0}$
solves the classical master equation, i.e., that it defines a differential on
$\Gamma(\calZ_g)$, which we call the genus $g$ Poisson complex.
Since $S_\text{eff,0}$ is also $\Delta$\ndash closed, the lowest nonvanishing quantum contribution to
$\Seff$ is a cocycle in the genus $g$ Poisson complex\footnote{If this happens at order $1$,
which can be easily seen not to be the case for $g\le1$, this class is also what is left after modding out by
quantum canonical transformations at this order.}
and defines an invariant of the Poisson structure,
which might be possible to compute in concrete examples.


\thebibliography{99}
\bibitem{AKSZ} M. Alexandrov, M. Kontsevich, A. Schwarz and O. Zaboronsky, ``The geometry of the master equation and topological quantum field theory,"
\ijmp{A12}, 1405\Ndash1430 (1997).
\bibitem{BR} I. N. Bernshtein and B. I. RozenfelÕd, ``Homogeneous spaces of infinite-dimensional Lie algebras and the characteristic classes of foliations," Uspehi Mat.\  Nauk \textbf{28}, 103\Ndash13 (1973).
\bibitem{BE} D. Berwick-Evans, ``The Chern--Gauss--Bonnet theorem via supersymmetric euclidean field theories," preprint.
\bibitem{JF} D. Berwick-Evans and T. Johnson-Freyd, ``Applications of BRST gauge fixing: Chern--Gauss--Bonnet and the volume of $X//TX$," preprint.
\bibitem{BMZ} F.~Bonechi, P.~Mn\"ev and M.~Zabzine, ``Finite dimensional AKSZ-BV theories," \lmp{94}, 197\Ndash228 (2010).
\bibitem{BZ} F.~Bonechi and M.~Zabzine, ``Poisson sigma model on the sphere,"
\cmp{285}, 1033\Ndash1063 (2009).
\bibitem{B} R. Bott, ``Some aspects of invariant theory in differential geometry," in \emph{Differential Operators on Manifolds}, Edizioni Cremonese, 49\Ndash145 (Rome, 1975).
\bibitem{BC} R. Bott and A. S. Cattaneo, ``Integral invariants of 3-manifolds,Ó \jdg{48}, 91\Ndash133 (1998).
\bibitem{C} A. S. Cattaneo, ``Configuration space integrals and invariants for 3-manifolds and knots,Ó in
\emph{Low Dimensional Topology}, ed.\  H. Nencka, \conm{233}, 153\Ndash165 (1999).
\bibitem{CF-AKSZ}  A. S. Cattaneo and G. Felder, ``On the AKSZ formulation of the Poisson sigma model,''
\lmp{56}, 163\Ndash179 (2001).
\bibitem{CF} A.~S.~Cattaneo and G.~Felder, ``On the globalization of Kontsevich's star product and the perturbative Poisson sigma model,"
\ptps{144}, 38\Ndash53 (2001).
\bibitem{CFeff} A. S. Cattaneo and G. Felder, ``Effective Batalin--Vilkovisky theories, equivariant configuration spaces and cyclic chains,Ó
\proma{287}, 111\Ndash137 (2011).
\bibitem{CM} A.~S.~Cattaneo and P.~Mn\"ev,
``Remarks on Chern--Simons invariants,''
\cmp{293}, 803\Ndash836 (2010).
\bibitem{CG} K. Costello and O. Gwilliam, ``Factorization algebras in perturbative quantum field theory,"
\href{http://math.northwestern.edu/~costello/factorization.pdf}{http://math.northwestern.edu/~costello/factorization.pdf}
\bibitem{FK} J. Fr\"ohlich and C. King,
``The Chern--Simons theory and knot polynomials,"
\cmp{126}, 167\Ndash199 (1989).
\bibitem{GK} I. Gelfand and D. Kazhdan, ``Some problems of differential geometry and the calculation of the cohomology of Lie algebras of vector fields," Sov. Math.\ Dokl. \textbf{12},
1367\Ndash1370 (1971).
\bibitem{I} N. Ikeda, ``Two-dimensional gravity and nonlinear gauge theory," \anp{235}, 435\Ndash464 (1994).
\bibitem{K} M. Kontsevich, ``Deformation quantization of Poisson manifolds,''
\lmp{66}, 157\Ndash216 (2003).
\bibitem{M} P. Mn\"ev, \emph{Discrete $BF$ theory}, Ph.~D.~Thesis, arXiv:08091160.
\bibitem{SS} P. Schaller and T. Strobl, ``Poisson structure induced (topological) field theories,"  \mpl{A9}, 3129\Ndash3136 (1994).
\bibitem{W} A. Weinstein, ``The modular automorphism group of a Poisson manifold,'' \jgp{23}, 379\Ndash394 (1997).
\bibitem{Wi} E. Witten, ÒSupersymmetry and Morse theoryÓ, \jdg{17}, 661 (1982).

\end{document}

